\def\be{\begin{equation}}
	\def\ee{\end{equation}}
\def\ba{\begin{array}}
	\def\ea{\end{array}}

\documentclass[prl,showpacs,twocolumn,amsmath]{revtex4}
\usepackage{amsmath}
\usepackage{amsfonts}
\usepackage{mathrsfs}
\usepackage{amssymb}
\usepackage{pifont}
\usepackage{epsfig,subfigure,dsfont,amsthm,amsbsy,mathrsfs,amscd}
\usepackage{epstopdf}
\usepackage{bbm}
\usepackage{color}     
\def\qed{\leavevmode\unskip\penalty9999 \hbox{}\nobreak\hfil
	\quad\hbox{\leavevmode  \hbox to.77778em{%
			\hfil\vrule   \vbox to.675em%
			{\hrule width.6em\vfil\hrule}\vrule\hfil}}
	\par\vskip3pt}
\usepackage{leftidx}
\newtheorem{theorem}{Theorem}

\input amssym.def
\begin{document}
	\title{\large\bf  Correlations Between Quantum Battery Capacity and Quantum Resources for Two-qubit System}
	
	\author{Yiding Wang, Xiaofen Huang and Tinggui Zhang$^{\dag}$}
	\affiliation{ School of Mathematics and Statistics, Hainan Normal University, Haikou, 571158, China \\
		$^{\dag}$ tinggui333@163.com}
	
	\bigskip
	\bigskip
	
	\begin{abstract}
	We investigate the relationship between quantum battery capacity and quantum resources in a two-qubit system consisting of mutually coupled battery and charger subsystems. We find that the battery capacity decreases monotonically with the quantum entanglement, steering, Bell nonlocality and coherence, and peaks when these four quantum resources vanish. Moreover, we reveal the capacity gap between the total system capacity and the sum of the battery and charger spin capacities, which is the residual battery capacity, and establish its positive correlation with entanglement. Furthermore, unlike the first four resources, although the battery capacity decreases monotonically with quantum imaginarity, its disappearance under system detuning does not guarantee a peak capacity, and this effect becomes more pronounced as the detuning increases. In contrast to the first five resources, the quantum state texture shows a positive correlation with battery capacity, but a negative correlation with entanglement, steering, Bell nonlocality, coherence, imaginarity, and residual battery capacity. These monotonic relationships are independent of the choice of system parameters. Our findings reveal the relationship between quantum battery capacity and quantum resources during the dynamic evolution of a quantum battery system, and advances the theory of quantum batteries and the development of quantum energy storage systems.
	\end{abstract}
	
	\pacs{04.70.Dy, 03.65.Ud, 04.62.+v} \maketitle
	
	\section{I. Introduction}
	Quantum thermodynamics is a rapidly evolving field that focuses on understanding the behavior of quantum systems at the nanoscale \cite{jgmh}. A significant offshoot is the concept of quantum batteries, which leverage quantum effects to enhance performance characteristics, such as during the charging process \cite{ramf,gmam,dfgm,drgm,drgm2}. The inherent quantum effects in these batteries have significant potential to surpass the capabilities of classical batteries.
	
	The concept of quantum battery was first introduced by Alicki and Fannes \cite{ramf}, who investigated the role of quantum resources, including entanglement and coherence, in enabling an efficient work extraction from quantum systems. Following this pioneering work, quantum batteries have received significant attention, with considerable effort devoted to exploring this emerging concept. These efforts can be broadly categorized into two types. The first focuses on the interplay between quantum battery performance and various quantum resources, revealing that figures of merit such as ergotropy and charging power are closely linked to the quantum properties of the battery states \cite{kvhm,mpkv,amar,lpga,gffc,fhkf,gf,mbaa,hlss,hyyh,gmfp,wljc,strss,agcs,imad,rcrn}. The second aspect focuses on investigating the impact of specific models on the charging process of quantum battery \cite{gmam,drgm,jxlh,tplj,sgtc,fqdh,rgds,lfmp,yyzt,fqdy,dfgm,gmdf,yvda,dlyf,fmyf,lwsq2,sczz,zglg}. This includes implementations such as spin-chain models \cite{drgm,tplj,sgtc,fqdh,rgds}, Tavis-Cummings models \cite{gmam,jxlh}, harmonic oscillator models \cite{dfgm,gmdf,yvda}, Dicke models \cite{lfmp,yyzt,fqdy}, three-level systems \cite{dlyf,fmyf}, and cavity-optomechanical model \cite{lwsq2}. Collectively, the primary goal of these efforts is to demonstrate how systems and operations in quantum mechanics enhance battery performance. 
	
	Furthermore, because any quantum system inevitably interacts with its environment, dissipative charging of quantum batteries has been extensively studied \cite{fb,fttf,fhkf2,mcac,jqqw,wctr,mlsl,fclm,lwsq,wlsh,mbah,sqll}. For a comprehensive overview of research on quantum batteries, see \cite{fcsg}. Recent research has yielded new findings and advances in the field of quantum batteries \cite{gmav,acvc,hyyk,mlht,yyxs,jtzc,wlsj,fmyfq,cadm}. Andolina et al. in \cite{gmav} proposed a quantum battery model exhibiting a quantum advantage, which saturates the quantum speed limit. The model consists of two harmonic oscillators coupled via nonlinear interaction during nonequilibrium charging. Hu et al. \cite{mlht} investigated a wireless charging scheme for quantum batteries using $n$ charger units. They found that multiple charger units enhance charging performance in the weak and moderate coupling regimes, whereas their efficiency is lower in the strong coupling regime. Song et al. \cite{wlsj} proposed a quantum battery scheme based on the nitrogen-vacancy center in diamond, and revealed that the inherent robustness against self-discharge can be enhanced by increasing the proportion of coherent ergotropy.
	
	Quantum battery capacity, a new figure of merit for evaluating the performance of quantum batteries, was first introduced by Yang et al. in 2023 \cite{yyas}, and this physical quantity has recently been experimentally verified based on an optical platform \cite{xyyh}. Despite the significant interest it sparked \cite{tgzh,aasa,yhst,ywxh,hltz,ykwl}, several important questions remain to be addressed. As previously mentioned, quantum resources such as entanglement and coherence play a crucial role in quantum batteries. Understanding how these resources relate to battery capacity thus presents an interesting and valuable question. Elucidating this connection will not only advance our understanding of how quantum resources influence the quantum energy storage systems, but also provide theoretical guidance for designing nanoscale energy devices with genuine quantum advantages. In this study, we establish a set of relations between the battery state capacity and various quantum resources (including entanglement, steering, Bell nonlocality, coherence, negativity, and quantum state texture) within a quantum system comprising a battery spin and a charger spin, during its dynamical evolution. We find that with the exception of quantum state texture, the other five resources exhibit a trade-off relationship with the battery state capacity. These relationships are established under a specific model of mutual coupling between the battery and charger subsystems, but they are independent of the choice of system parameters.
	
	The remainder of this paper is organized as follows. In Section II, we introduce several physical quantities involved in this study and their corresponding quantitative methods, including quantum entanglement, quantum steering, Bell nonlocality, quantum coherence, quantum imaginarity, quantum state texture, and quantum battery capacity. In Section III, we present our main results: the relationship between quantum resources and quantum battery capacity, positive correlation between entanglement and residual battery capacity, and negative correlation between quantum state texture and residual battery capacity. We summarize and discuss our conclusions in the last section.

	\section{II. PRELIMINARIES}
   First, we provide a brief review of the quantum resources explored in this work, including quantum entanglement, quantum steering, Bell nonlocality, quantum coherence, quantum imaginarity, and quantum state texture. The mixedness of a reduced density matrix derived from a pure state indicates the presence of entanglement of composite system. Hence the mixedness in the subsystem $\rho_b$ directly reflects the entanglement between the two qubits. We can quantify this entanglement using concurrence, defined as: \cite{wkw,csf,rphk,oggt,mcc}:
   \begin{equation}\label{e1}
   	\mathcal{E}(\rho)=\sqrt{2[1-\text{Tr}(\rho_b^2)]}.
   \end{equation}

   Quantum steering is considered a quantum resource that is stronger than entanglement, but weaker than Bell nonlocality \cite{mtqt}. From the local hidden-state model, steering inequalities have been developed, and their violation demonstrates the presence of steering. Specifically, Cavalcanti et al. proposed a linear steering inequality to verify whether a bipartite state is steerable when both Alice and Bob perform three binary measurements on their respective subsystems. The form of three-setting linear steering inequality can be written as \cite{egcs}:
   \begin{equation}\label{e18}
   	F_3(\rho_{AB},\mu)=\frac{1}{\sqrt{3}}|\sum_{k=1}^{3}\langle A_kB_k\rangle|\leq1,
   \end{equation}
   where $A_k=\vec{a}_k.\vec{\sigma}$ and $B_k=\vec{b}_k.\vec{\sigma}$ with $\vec{a}_k, \vec{b}_k\in\mathbb{R}^3$ are the unit and orthonormal vectors, respectively, and $\vec{\sigma}=(\sigma_1, \sigma_2, \sigma_3)$. $\langle A_kB_k\rangle=\text{Tr}(\rho_{AB}A_kB_k)$, and $\mu=\{\vec{a}_1, \vec{a}_2, \vec{a}_3, \vec{b}_1, \vec{b}_2, \vec{b}_3\}$ is the set of measurement directions. The bipartite state $\rho$ is $F_3$ steerable if \cite{acsc,bpkm}
   \begin{equation}\label{e19}
   	\mathcal{S}(\rho)=\text{Tr}(T(\rho)T(\rho)^\tau)-1>0,
   \end{equation}
   where $T(\rho)=[t_{ij}]$ is the correlation matrix with $t_{ij}=\text{Tr}(\rho(\sigma_i\otimes\sigma_j))$, and $\tau$ represents the transposition operation. We use $\mathcal{S}(\rho)$ as the steering measure.
   
   For an arbitrary two-quibt state $\rho$, the maximum violation of the CHSH inequality is given by \cite{hhh}
   \begin{equation}\label{e23}
   	\mathcal{B}_{max}(\rho)=2\sqrt{\iota_1+\iota_2}.
   \end{equation}
   Here, $\iota_1$ and $\iota_2$ are the two largest  eigenvalues of $T(\rho)T(\rho)^\tau$, where $T(\rho)$ is the correlation matrix. In classical scenarios, the CHSH value does not exceed 2, which allows us to use the violation of the CHSH inequality to quantify the quantum resource of Bell nonlocality, i.e.
   \begin{equation}\label{e24}
   	\mathcal{B}(\rho)=\mathcal{B}_{max}(\rho)-2=2\sqrt{\iota_1+\iota_2}-2.
   \end{equation}
   
   Quantum coherence characterizes the ability of quantum systems to maintain phase relationships, serving as a central resource in quantum computing and information processing. Among all coherence monotones, the measure induced by the $l_1$ norm serves as a vital tool for quantifying coherence, which can be denoted as \cite{tbmc}
   \begin{equation}\label{c1}
   	C_1(\rho)=\min\limits_{\delta\in\mathcal{I}}\|\rho-\delta\|_{l_1}=\sum_{i\neq j}|\rho_{ij}|,
   \end{equation}
   where the minimum state $\delta_\text{min}=\Delta(\rho)$ is directly connected to the original state $\hat{\rho}$ by physical operation dephasing $\Delta$. In other words, the density matrix of $\delta_\text{min}$ consists only of the diagonal elements of $\rho$
   
   Quantum imaginarity was first introduced by Hickey and Gour \cite{ahgg}. Although they established a rigorous resource theory, its utility remained unclear until Renou et al. found an experimental scenario that could not be modeled with only real-valued quantum amplitudes \cite{mord}. This finding has stimulated extensive research on quantum imaginarity. It was shown that imaginarity has significant effects on certain discrimination tasks \cite{kdwt}, machine learning \cite{msvs}, pseudorandomness \cite{thkb}, and weak-value theory \cite{rwef}. Here, we employ the $l_1$-norm to quantify this quantum resource. The $l_1$-norm of imaginarity for a quantum state $\rho$ is defined as the sum of the absolute values of the imaginarity parts of all off-diagonal elements in the density matrix \cite{qctg},
   \begin{equation}\label{e4}
   	\mathcal{I}(\rho)=\sum_{i\neq j}|\text{Im}(\rho_{ij})|.
   \end{equation}
   The $l_1$-norm of imaginarity for $\rho$ is zero under the condition that all off-diagonal elements of the density matrix are real in the chosen basis. Such states, with vanishing quantum imaginarity, are known as free states within the resource theory of imaginarity.
   
   Quantum state texture (QST) is a novel quantum resource that has recently been proposed \cite{fp,wyd,czbh,atpa,yzsk}, receiving considerable interest owing to its intimate relationship with quantum coherence. Under the given computational basis $\{|i\rangle\}$, we consider the row and column indices of the density matrix of a quantum state as the first two dimensions and the real or imaginary part of each matrix element as the third dimension. In this way, each quantum state is associated with a unique three-dimensional plot. The quantum state texture is what characterizes the uneveness of this three-dimensional plot.  There exists one and only one texture-free state, expressed as:
   \begin{equation}\label{e2}
   	f_1=|f_1\rangle\langle f_1|,
   \end{equation}
   where $|f_1\rangle=\frac{1}{\sqrt{d}}\sum_{i=0}^{d-1}|i\rangle$ and $d$ is the dimension of Hilbert space $H$. In this work, we choose the trace distance measure $\mathcal{T}_{tr}$ to quantify QST, defined as \cite{wyd}:
   \begin{equation}\label{e3}
   	\mathcal{T}_\text{tr}(\rho)=\frac{1}{2}\text{Tr}|\rho-f_1|,
   \end{equation}
   where $\text{Tr}|A|=\text{Tr}(AA^\dagger)^\frac{1}{2}$ is the trace norm (or 1-norm) of $A$.
   
   In this study, we focus on the relationship between quantum resources and quantum battery. The parameter that serve as key figure of merit for evaluating the performance of quantum batteries is the battery capacity \cite{yyas,xyyh,tgzh}, which is defined as the energy difference between the active and passive states corresponding to the battery state $\rho$:
   \begin{equation}\label{e5}
   	\mathcal{C}(\rho;H)=\text{Tr}[\rho^\uparrow H]-\text{Tr}[\rho^\downarrow H],
   \end{equation}
   where $\rho^\uparrow=U^\uparrow\rho(U^\uparrow)^\dagger$ is the active state associated with $\rho$, $\rho^\downarrow=U^\downarrow\rho(U^\downarrow)^\dagger$ is the passive state associated with $\rho$, $U^\uparrow$ and $U^\downarrow$ denote the unitary operations that realize the minimum and the maximum of the work extraction functional. A state is active if and only if no further energy can be injected into it by means of unitary operation. By this reasoning, if a state $\rho_b^\downarrow$ is passive, then no more energy can be extracted from it using unitary transformations. Note that an analytical expression for the battery capacity exists, based on the unitary invariance, in terms of the spectrum of the battery state and the energy levels of the Hamiltonian \cite{yyas}:
   \begin{equation}\label{e6}
   	\mathcal{C}(\rho;H)=\sum_{i=0}^{d-1}\epsilon_i(\lambda_i-\lambda_{d-1-i}),
   \end{equation}
   where $\{\lambda_i\}$ and $\{\epsilon_i\}$ represent the energy levels of $\rho$ and the Hamiltonian $H_b$, respectively, arranged in ascending order without loss of generality, i.e., $\lambda_0\leqslant\lambda_1\leqslant...\leqslant\lambda_{d-1}$ and $\epsilon_0\leqslant\epsilon_1\leqslant...\leqslant\epsilon_{d-1}$.

	\section{III. Quantum battery capacity and Quantum Resource}
	\subsection{ Quantum battery capacity}
	For a two-qubit quantum battery system composed of one battery spin and one charger spin, the Hamiltonian can be written as
	\begin{align}
		H=H_b&+H_c+H_I,\label{e7}\\
		H_b&=-\omega_b\sigma_z^b,\label{e8}\\
		H_c&=-\omega_c\sigma_z^c,\label{e9}\\
		H_I=J_1(\sigma_+^b\sigma_-^c&+\sigma_-^b\sigma_+^c)+J_2\sigma_z^b\sigma_z^c.\label{e10}
	\end{align}
	Here, $\sigma_z^b$ and $\sigma_z^c$	represent the spin operators of the battery and the charger parts, respectively, $\omega_b$ and $\omega_c$ denote the effective external magnetic field strengths for the battery and charger parts, and $\sigma_{x,y,z}^{b(c)}$ refer to the standard Pauli operators. $\sigma_\pm^{b(c)}=(\sigma_x^{b(c)}\pm i\sigma_y^{b(c)})/2$ represent the up and down operators of the qubit. $J_1$ and $J_2$ are the spin flip-flop interaction strength and Ising interaction strength, respectively.
		
    In the initial time, the battery is in the ground state $\rho_b(0)=|0\rangle_b$, and the charger is in the excited state $\rho_c(0)=|1\rangle_c$. The evolving state is given by $\rho(t)=U(t)\rho(0)U(t)^\dagger$ with $U(t)=\exp(-i\int_{0}^{t}Hdt)$. Herein, we set Planck constant $\hbar=1$. Since the interaction term $J_1(\sigma_+^b\sigma_-^c+\sigma_-^b\sigma_+^c)$ preserves the total excitation number, the evolution involves only the two states $|01\rangle$ and $|10\rangle$. We can therefore restrict the dynamics to a two-dimensional subspace and simplify the Hamiltonian accordingly. Specifically,
    \begin{equation*}
    	\begin{split}
    		\langle01|H|01\rangle&=\langle01|(-\omega_b\sigma_z^b)|01\rangle+\langle01|(-\omega_c\sigma_z^c)|01\rangle\\
    		&+\langle01|[J_1(\sigma_+^b\sigma_-^c+\sigma_-^b\sigma_+^c)+J_2\sigma_z^b\sigma_z^c]|01\rangle\\
    		&=\omega_b-\omega_c-J_2.
    	\end{split}
    \end{equation*}
    Here we have used that $\sigma_+^b\sigma_-^c|01\rangle=|10\rangle$ and $\sigma_-^b\sigma_+^c|01\rangle=0$. 
    \begin{equation*}
    	\begin{split}
    		\langle01|H|10\rangle&=\langle01|(-\omega_b\sigma_z^b)|10\rangle+\langle01|(-\omega_c\sigma_z^c)|10\rangle\\
    		&+\langle01|[J_1(\sigma_+^b\sigma_-^c+\sigma_-^b\sigma_+^c)+J_2\sigma_z^b\sigma_z^c]|10\rangle\\
    		&=J_1.
    	\end{split}
    \end{equation*}
    Repeating this process, we can obtain $\langle10|H|01\rangle=J_1, \langle10|H|10\rangle=\omega_c-\omega_b-J_2$. Therefore, the system Hamiltonian given in Eqs. (\ref{e7})-(\ref{e10}) can be rewritten as
    \begin{equation}\label{e11}
    	H=\left(
    	\begin{array}{cc}
    		\omega_b-\omega_c-J_2 & J_1\\
    		J_1 & \omega_c-\omega_b-J_2\\
    	\end{array}
    	\right ).
    \end{equation} 
    The eigenvalues of the Hamiltonian given by (\ref{e11}) are $e_1=\sqrt{J_1^2+(\omega_b-\omega_c)^2}-J_2, e_2=-\sqrt{J_1^2+(\omega_b-\omega_c)^2}-J_2$, with the corresponding eigenvectors are
    \begin{equation*}
    	\begin{split}
    		|e_1\rangle&=\frac{1}{\sqrt{\xi_1^2+1}}(\xi_1|01\rangle+|10\rangle),\\
    		|e_2\rangle&=\frac{1}{\sqrt{\xi_2^2+1}}(\xi_2|01\rangle+|10\rangle),
    	\end{split}
    \end{equation*}
    where $\xi_1=(\omega_b-\omega_c+\sqrt{J_1^2+(\omega_b-\omega_c)^2})/J_1$, and $\xi_2=(\omega_b-\omega_c-\sqrt{J_1^2+(\omega_b-\omega_c)^2})/J_1$. We express the initial state $|01\rangle$ in terms of the eigenvectors $|e_1\rangle$ and $|e_2\rangle$ as:
    \begin{equation*}
    	|01\rangle=\frac{\sqrt{\xi_1^2+1}}{\xi_1-\xi_2}|e_1\rangle-\frac{\sqrt{\xi_2^2+1}}{\xi_1-\xi_2}|e_2\rangle.
    \end{equation*}
    Thus, the state of the total system at time $t$ can be written as:
    \begin{equation*}
    	\begin{split}
    		|\psi(t)\rangle&=\exp(-iHt)|01\rangle\\
    		&=\frac{e^{-ie_1t}\sqrt{\xi_1^2+1}}{\xi_1-\xi_2}|e_1\rangle-\frac{e^{-ie_2t}\sqrt{\xi_2^2+1}}{\xi_1-\xi_2}|e_2\rangle\\
    		&=\frac{e^{-ie_1t}}{\xi_1-\xi_2}(\xi_1|01\rangle+|10\rangle)-\frac{e^{-ie_2t}}{\xi_1-\xi_2}(\xi_2|01\rangle+|10\rangle)\\
    		&=\frac{(e^{-ie_1t}\xi_1-e^{-ie_2t}\xi_2)|01\rangle+(e^{-ie_1t}-e^{-ie_2t})|10\rangle}{\xi_1-\xi_2}.
    	\end{split}
    \end{equation*}
    By performing the partial trace operation, the density matrix of the battery state $\rho_b(t)=\text{Tr}_c[|\psi(t)\rangle\langle\psi(t)|]$ is given by
    \begin{equation}\label{e12}
    	\rho_b(t)=\left(
    	\begin{array}{cc}
    		\frac{1+\cos(e_1-e_2)t+\frac{2(\omega_b-\omega_c)^2}{J_1^2}}{2+\frac{2(\omega_b-\omega_c)^2}{J_1^2}} & 0\\
    		0 & \frac{1-\cos(e_1-e_2)t}{2+\frac{2(\omega_b-\omega_c)^2}{J_1^2}}\\
    	\end{array}
    	\right ).
    \end{equation}
	
	Our idea is to reveal more fundamental relationships among entanglement $\mathcal{E}$, steering $\mathcal{S}$, Bell nonlocality $\mathcal{B}$, coherence $C_1$, imaginarity $\mathcal{I}$, QST $\mathcal{T}_{tr}$, and battery capacity $\mathcal{C}$, rather than focusing solely on physical interpretations under specific parameter choices.
	    
	\subsection{Quantum entanglement}
	     We begin by discussing entanglement, which develops between the battery and the charger during the time evolution of the total system. Here, we derive a relation between battery-charger entanglement and the capacity of battery state during dynamic evolution.
	    \begin{theorem}
	    During the dynamic evolution of the quantum battery system, the battery-charger entanglement $\mathcal{E}(\rho(t))$ and the battery state capacity $\mathcal{C}(\rho_b(t);H_b)$ at time $t$ satisfy the following relationship:
	    \begin{equation}\label{e13}
	    		\mathcal{C}(\rho_b(t);H_b)=2\omega_b\sqrt{1-\mathcal{E}(\rho(t))^2}.
	    \end{equation}
	    In other words, battery-charger entanglement is detrimental to the battery state capacity.
	    \end{theorem}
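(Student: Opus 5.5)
Since $\rho_b(t)$ in Eq.~(\ref{e12}) is already diagonal, I will read its spectrum off directly, write both $\mathcal{E}$ and $\mathcal{C}$ as functions of that spectrum, and eliminate it. Denote the diagonal entries by $\lambda$ and $1-\lambda$, with $\lambda_0=\min\{\lambda,1-\lambda\}$ and $\lambda_1=\max\{\lambda,1-\lambda\}$. The global state $|\psi(t)\rangle$ is pure, so the concurrence (\ref{e1}) applies to the reduced state $\rho_b(t)$. Using $\text{Tr}(\rho_b^2)=\lambda^2+(1-\lambda)^2$ gives $1-\text{Tr}(\rho_b^2)=2\lambda(1-\lambda)$, hence
\begin{equation*}
\mathcal{E}(\rho(t))^2=4\lambda(1-\lambda),\qquad 1-\mathcal{E}(\rho(t))^2=(1-2\lambda)^2=(\lambda_1-\lambda_0)^2 .
\end{equation*}

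Next I compute the capacity from the closed form (\ref{e6}). The battery Hamiltonian $H_b=-\omega_b\sigma_z^b$ has energy levels $\pm\omega_b$. Taking $\omega_b>0$, they are ordered as $\epsilon_0=-\omega_b$ and $\epsilon_1=\omega_b$. For $d=2$, Eq.~(\ref{e6}) reads
\begin{equation*}
\mathcal{C}=\epsilon_0(\lambda_0-\lambda_1)+\epsilon_1(\lambda_1-\lambda_0)=(\epsilon_1-\epsilon_0)(\lambda_1-\lambda_0)=2\omega_b(\lambda_1-\lambda_0).
\end{equation*}
Because $\lambda_1-\lambda_0\ge 0$, I may take the nonnegative square root of the previous identity, $\lambda_1-\lambda_0=\sqrt{1-\mathcal{E}^2}$. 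This yields Eq.~(\ref{e13}). The map $x\mapsto\sqrt{1-x^2}$ is decreasing on $[0,1]$, which gives the stated monotonic trade-off.

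I expect the only delicate points to be bookkeeping. First, the ordering of eigenvalues in (\ref{e6}) must be handled through $|1-2\lambda|$ rather than $1-2\lambda$, since which diagonal entry is larger changes with $t$ and detuning. Second, the sign convention of $\omega_b$ needs care: for general sign one gets $2|\omega_b|$, so I will state the assumption $\omega_b>0$. Finally, using (\ref{e1}) requires purity of $\rho(t)$, which holds because the evolution is unitary from the pure state $|01\rangle$.
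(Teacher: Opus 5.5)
Your proof is correct and follows the paper's argument. Both compute $\mathcal{C}=2\omega_b|1-2p(t)|$ and $\mathcal{E}=2\sqrt{p(t)(1-p(t))}$ from the diagonal reduced state and then eliminate $p(t)$; the paper does this by solving the quadratic $p^2-p+\mathcal{E}^2/4=0$ and substituting both roots, while your identity $1-\mathcal{E}^2=(1-2p)^2$ does the same elimination more directly and avoids the $\pm$ bookkeeping.
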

	    \begin{proof}
	    From the Eq. (\ref{e12}), the density matrix of the battery state at time $t$ can be parameterized as
	    \begin{equation*}
	    	\rho_b(t)=\left(
	    	\begin{array}{cc}
	    		p(t) & 0\\
	    		0 & 1-p(t)\\
	    	\end{array}
	    	\right ),
	    \end{equation*} 
	    where 
	    $$p(t)=\frac{1+\cos(e_1-e_2)t+\frac{2(\omega_b-\omega_c)^2}{J_1^2}}{2+\frac{2(\omega_b-\omega_c)^2}{J_1^2}}.$$
	    According to Eqs. (\ref{e1}) and (\ref{e6}), we can calculate that the capacity of the battery state $\mathcal{C}(\rho_b(t);H_b)=2\omega_b|1-2p(t)|$, and the entanglement
	    \begin{equation*}
	    \mathcal{E}(\rho(t))=2\sqrt{p(t)-p(t)^2}.
	    \end{equation*}
	    This means that 
	    \begin{equation*}
	    p(t)^2-p(t)+\frac{\mathcal{E}(\rho(t))^2}{4}=0.
	    \end{equation*}
	    The solutions to this quadratic equation are $p(t)=1/2\pm\sqrt{1-\mathcal{E}(\rho(t))^2}/2$. Substituting the solutions for $p(t)$ into the expression of $\mathcal{C}(\rho_b(t);H_b)$, we obtain that
	    \begin{equation*}
	    \begin{split}
	    \mathcal{C}(\rho_b(t);H_b)&=2\omega_b|1-2p(t)|\\
	    &=2\omega_b|1-2(\frac{1}{2}\pm\frac{\sqrt{1-\mathcal{E}(\rho(t))^2}}{2})|\\
	    &=2\omega_b|\pm\sqrt{1-\mathcal{E}(\rho(t))^2}|\\
	    &=2\omega_b\sqrt{1-\mathcal{E}(\rho(t))^2}.
	    \end{split}
	    \end{equation*}
	    \end{proof}
	    Eq. (\ref{e13}) indicates that during time evolution, the capacity of the battery state diminishes as the battery-charger entanglement increases. In fact, Theorem 1 establishes a mathematical connection between the battery-charger entanglement $\mathcal{E}$ and the capacity of the battery state $\mathcal{C}$. We aim to further understand the physical essence of this phenomenon. To this end, we first demonstrate that the sum of the battery spin capacity and the charger spin capacity is bounded by a fixed upper bound. 
	    \begin{theorem}
	    During the dynamic evolution of the quantum battery system, such a trade-off relationship holds:
	    \begin{equation}\label{e14}
	    	\mathcal{C}(\rho_b(t);H_b)+\mathcal{C}(\rho_c(t);H_c)\leq\mathcal{C}(\rho(t);H).
	    \end{equation}
	    \end{theorem}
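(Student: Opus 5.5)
Since the global state $\rho(t)=|\psi(t)\rangle\langle\psi(t)|$ is pure, I plan to evaluate the right-hand side of (\ref{e14}) exactly, compute the two local capacities with the same parametrization as in the proof of Theorem 1, and then compare. Throughout I assume $\omega_b,\omega_c>0$.

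\textbf{Step 1 (total capacity).} For a pure state the spectrum is $\lambda_{d-1}=1$ and $\lambda_i=0$ otherwise. Formula (\ref{e6}) then gives $\mathcal{C}(\rho(t);H)=\epsilon_{\max}-\epsilon_{\min}$, the spectral width of $H$.

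Here I will use the full $4\times4$ Hamiltonian (\ref{e7})--(\ref{e10}), not only its restriction (\ref{e11}) to the single-excitation subspace. The states $|00\rangle$ and $|11\rangle$ are eigenvectors of $H$, since the flip-flop term annihilates them. With the sign convention fixed by the computation of $\langle01|H|01\rangle$, their energies are $\pm(\omega_b+\omega_c)-J_2$ in some order. The other two eigenvalues are $e_{1,2}=\pm\sqrt{J_1^2+(\omega_b-\omega_c)^2}-J_2$. Hence
\[
\mathcal{C}(\rho(t);H)=2\max\Big\{\omega_b+\omega_c,\ \sqrt{J_1^2+(\omega_b-\omega_c)^2}\Big\}\geq 2(\omega_b+\omega_c),
\]
and this does not depend on $t$.

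\textbf{Step 2 (local capacities).} From the form of $|\psi(t)\rangle$, the reduced charger state is also diagonal, namely $\rho_c(t)=\mathrm{diag}(1-p(t),p(t))$ with the same $p(t)$ as in Theorem 1. This holds because the weight on $|01\rangle$ puts the battery in $|0\rangle$ and the charger in $|1\rangle$.

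Formula (\ref{e6}) with $H_c=-\omega_c\sigma_z^c$ gives $\mathcal{C}(\rho_c(t);H_c)=2\omega_c|1-2p(t)|$. Combined with $\mathcal{C}(\rho_b(t);H_b)=2\omega_b|1-2p(t)|$ from the proof of Theorem 1, the left-hand side of (\ref{e14}) equals
\[
2(\omega_b+\omega_c)|1-2p(t)| .
\]
By Theorem 1 this is also $2(\omega_b+\omega_c)\sqrt{1-\mathcal{E}(\rho(t))^2}$.

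\textbf{Step 3 (comparison).} Since $0\le p(t)\le1$, we have $|1-2p(t)|\le1$. Therefore the left-hand side is at most $2(\omega_b+\omega_c)$, which by Step 1 is at most $\mathcal{C}(\rho(t);H)$. This proves (\ref{e14}).

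As a by-product, the gap $\mathcal{C}(\rho;H)-\mathcal{C}(\rho_b;H_b)-\mathcal{C}(\rho_c;H_c)$ is expressed directly through $\mathcal{E}(\rho(t))$. This is the residual capacity that the paper later correlates with entanglement.

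\textbf{Main obstacle.} The delicate point is Step 1, namely which Hamiltonian defines $\mathcal{C}(\rho(t);H)$. If one uses only the restricted $2\times2$ matrix (\ref{e11}), the width is $2\sqrt{J_1^2+(\omega_b-\omega_c)^2}$. This can be smaller than $2(\omega_b+\omega_c)$, for example when $J_1$ is small and $p(t)$ is near $0$ or $1$, and then the inequality can fail. So I will insist on the full Hamiltonian, whose spectrum contains $|00\rangle$ and $|11\rangle$.

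I will also check the sign conventions for $\sigma_z|0\rangle$ carefully. The key fact is only that the energies of $|00\rangle$ and $|11\rangle$ differ by exactly $2(\omega_b+\omega_c)$, and this holds regardless of the convention.
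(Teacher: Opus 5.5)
Your proof is correct, and it takes a different route from the paper's.

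\textbf{The paper's route.} Appendix A proves the inequality for an arbitrary two-qubit X state. It passes to the dephased state $\Delta(\rho)$ and uses $\Delta(\rho)\prec\rho$ together with Schur-convexity to get $\mathcal{C}(\rho;H)\geq\mathcal{C}(\Delta(\rho);H)=(\alpha_4-\alpha_1)(\epsilon_4-\epsilon_1)+(\alpha_3-\alpha_2)(\epsilon_3-\epsilon_2)$. It then applies rearrangement-type bounds on the full four-level spectrum and on the sorted populations to compare this with $2\omega_b|\rho_{11}+\rho_{22}-\rho_{33}-\rho_{44}|+2\omega_c|\rho_{11}+\rho_{33}-\rho_{22}-\rho_{44}|$.

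\textbf{Your route.} You use the fact that the evolved state is pure and lies in $\mathrm{span}\{|01\rangle,|10\rangle\}$. So the total capacity is the spectral width of $H$, and the two local capacities reduce to $2(\omega_b+\omega_c)|1-2p(t)|$. The inequality then follows because the $|00\rangle$--$|11\rangle$ gap is a lower bound on that width.

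\textbf{Comparison.} Your argument is shorter and avoids majorization entirely. It also sidesteps the paper's eigenvalue-pairing steps, which are stated tersely, and it already gives the exact residual capacity of Theorem 3. The paper's argument is more general: it covers mixed X states, such as mixed initial conditions or the dephased states of Appendix B, where your purity step is unavailable. Your warning that the restricted $2\times 2$ matrix (11) would not suffice is correct. The paper agrees, since Appendix A also uses the full spectrum.

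\textbf{A small correction in Step 1.} $\sigma_z^b\sigma_z^c=+1$ on both $|00\rangle$ and $|11\rangle$ in any convention. So their energies are $J_2\pm(\omega_b+\omega_c)$, not $-J_2\pm(\omega_b+\omega_c)$; compare the spectrum listed in Appendix A. Your closed form $2\max\{\omega_b+\omega_c,\sqrt{J_1^2+(\omega_b-\omega_c)^2}\}$ is therefore wrong when $J_2\neq 0$. The actual width is
\[
\max\Big\{2(\omega_b+\omega_c),\ 2\sqrt{J_1^2+(\omega_b-\omega_c)^2},\ 2|J_2|+\omega_b+\omega_c+\sqrt{J_1^2+(\omega_b-\omega_c)^2}\Big\}.
\]
This does not affect the proof. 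Step 3 only uses $\mathcal{C}(\rho(t);H)\geq 2(\omega_b+\omega_c)$, and as you note, that follows from the $|00\rangle$--$|11\rangle$ energy difference alone.
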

	    See Appendix A for a detailed proof of Eq. (\ref{e14}). Following the idea in \cite{ywxh}, we can naturally define the residual battery capacity within the model considered in this work as the difference between the total system capacity and the subsystem capacity:
	    \begin{equation}\label{e15}
	    	R(\rho,H)=\mathcal{C}(\rho;H)-\mathcal{C}(\rho_b;H_b)-\mathcal{C}(\rho_c;H_c).
	    \end{equation}
	    If we consider $\mathcal{C}(\rho_b;H_b)$ and $\mathcal{C}(\rho_c;H_c)$ as a combined entity representing the subsystem battery capacity, then Eq. (\ref{e15}) reveals a trade-off relationship between the subsystem battery capacity and the residual battery capacity, in which the magnitude of one directly impacts the other. The following results demonstrate that the residual battery capacity and the battery-charger entanglement are positively correlated during the dynamical evolution.
	    \begin{theorem}
	    During the time evolution of the quantum battery system, the battery-charger entanglement $\mathcal{E}(\rho(t))$ and the residual battery capacity $R(\rho(t),H)$ at time $t$ satisfy:
	    	\begin{equation}\label{e16}
	    	\mathcal{R}(\rho(t),H)=\epsilon_4-\epsilon_1-2(\omega_b+\omega_c)\sqrt{1-\mathcal{E}(\rho(t))^2}.
	    	\end{equation}
	    Here, $\epsilon_4$ and $\epsilon_1$ denote the maximum and minimum eigenvalues, respectively, of the total system Hamiltonian.
	    \end{theorem}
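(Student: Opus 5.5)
The plan is to compute the three capacities in the definition (\ref{e15}) separately and then rewrite everything in terms of $p(t)$. After that, Theorem 1 lets us trade $|1-2p(t)|$ for $\sqrt{1-\mathcal{E}(\rho(t))^2}$.

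\textbf{Total capacity.} The global state $\rho(t)=|\psi(t)\rangle\langle\psi(t)|$ is pure, so its spectrum is $\{0,0,0,1\}$. Applying the analytic formula (\ref{e6}) on the full four-dimensional space, with the eigenvalues of the full two-qubit Hamiltonian $H$ ordered as $\epsilon_1\le\epsilon_2\le\epsilon_3\le\epsilon_4$, gives
$$\mathcal{C}(\rho(t);H)=\epsilon_4-\epsilon_1.$$
This value is constant in time.

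\textbf{Battery capacity.} In the proof of Theorem 1 we already found
$$\mathcal{C}(\rho_b(t);H_b)=2\omega_b|1-2p(t)|=2\omega_b\sqrt{1-\mathcal{E}(\rho(t))^2}.$$

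\textbf{Charger capacity.} Tracing out the battery from $|\psi(t)\rangle$, which lives in the span of $|01\rangle$ and $|10\rangle$, gives a diagonal reduced state $\rho_c(t)=\mathrm{diag}(1-p(t),p(t))$. This holds because the charger is in $|1\rangle$ exactly when the battery is in $|0\rangle$. The spectrum is therefore the same as that of $\rho_b(t)$, as it must be for the two marginals of a pure state. The charger Hamiltonian $H_c=-\omega_c\sigma_z^c$ has levels $\pm\omega_c$, so (\ref{e6}) gives
$$\mathcal{C}(\rho_c(t);H_c)=2\omega_c|1-2p(t)|.$$
The entanglement $\mathcal{E}$ defined through $\rho_b$ equals the one computed through $\rho_c$, since $\mathrm{Tr}\rho_b^2=\mathrm{Tr}\rho_c^2$. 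Hence
$$\mathcal{C}(\rho_c(t);H_c)=2\omega_c\sqrt{1-\mathcal{E}(\rho(t))^2}.$$

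\textbf{Combining.} Substituting the three pieces into (\ref{e15}) yields
$$R=\epsilon_4-\epsilon_1-2(\omega_b+\omega_c)\sqrt{1-\mathcal{E}^2},$$
which is (\ref{e16}). Because $\epsilon_4-\epsilon_1$ does not depend on time, $R$ increases monotonically with $\mathcal{E}$, giving the claimed positive correlation.

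\textbf{Main obstacle.} The delicate point is the interpretation of $\epsilon_1,\epsilon_4$ and the sign conventions. One must use the full $4\times4$ spectrum of $H$, not the restricted $2\times2$ block (\ref{e11}). That full spectrum is $\{-\omega_b-\omega_c+J_2,\ \omega_b+\omega_c+J_2,\ e_1,\ e_2\}$, where the first two come from $|00\rangle,|11\rangle$ and the last two from the restricted block. The formula only needs the identities of the extreme values, so there is no need to determine which of these four is extremal.

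We should also assume $\omega_b,\omega_c\ge0$, so that the level gaps of $H_b$ and $H_c$ are $2\omega_b$ and $2\omega_c$. Otherwise $\omega_b$ and $\omega_c$ in the formula should be read as $|\omega_b|$ and $|\omega_c|$.
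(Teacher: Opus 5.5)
Your proposal is correct and follows essentially the same route as the paper. Both arguments use three ingredients. First, the total capacity is $\epsilon_4-\epsilon_1$ because the global state keeps spectrum $\{0,0,0,1\}$; the paper phrases this as unitary invariance starting from $\rho(0)$. Second, the subsystem capacities are $2\omega_b|1-2p(t)|$ and $2\omega_c|1-2p(t)|$. Third, Theorem 1 converts $|1-2p(t)|$ into $\sqrt{1-\mathcal{E}(\rho(t))^2}$.

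The differences are cosmetic. The paper passes through $\omega_c\mathcal{C}(\rho_b(t);H_b)=\omega_b\mathcal{C}(\rho_c(t);H_c)$ and Eq.~(\ref{e17}) before substituting, whereas you substitute directly. The paper also gets the charger marginal ``by symmetry,'' while you compute $\rho_c(t)=\mathrm{diag}(1-p(t),p(t))$ explicitly.
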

	    \begin{proof}
	    We consider the spectral decomposition of the total system Hamiltonian, $H=\sum_{i=1}^4\epsilon_i|\epsilon_i\rangle\langle\epsilon_i|$, with its eigenvalues arranged in ascending order: $\epsilon_1\leq\epsilon_2\leq\epsilon_3\leq\epsilon_4$. Based on the proof of Theorem 1, we can obtain the battery capacity $\mathcal{C}(\rho_b(t);H_b)=2\omega_b|1-2p(t)|$. By symmetry, the capacity for the charger spin follows readily as $\mathcal{C}(\rho_c(t);H_c)=2\omega_c|1-2p(t)|$. Since the total battery capacity is invariant under unitary evolution, we have $\mathcal{C}(\rho(t);H)=\mathcal{C}(\rho(0);H)=(1-0)\epsilon_4+(0-1)\epsilon_1=\epsilon_4-\epsilon_1$. Therefore, the residual battery capacity at time $t$ is given by
	    \begin{equation*}
	    	\mathcal{R}(\rho(t),H)=\epsilon_4-\epsilon_1-2(\omega_b+\omega_c)|1-2p(t)|.
	    \end{equation*}
	    Note that the capacities of the battery and charger spins are related by:
	    \begin{equation*}
	    	\omega_c\mathcal{C}(\rho_b(t);H_b)=\omega_b\mathcal{C}(\rho_c(t);H_c).
	    \end{equation*}
	    Thus the expression for the capacity of the battery state can be written as
	    \begin{equation}\label{e17}
	    	\mathcal{C}(\rho_b(t);H_b)=\frac{\omega_b}{\omega_b+\omega_c}(\epsilon_4-\epsilon_1-\mathcal{R}(\rho(t),H)).
	    \end{equation}
	    Substituting Eq. (\ref{e17}) into Eq. (\ref{e13}) yields the relation between the residual battery capacity and entanglement. Since $\mathcal{E}(\rho(t))$ is negatively correlated with $\mathcal{C}(\rho_b(t);H_b)$, it follows that the residual battery capacity is positively correlated with the entanglement.
	    \end{proof}
	     When two subsystems are highly entangled, most of the battery capacity is residual. This occurs because the reduction operation discards a significant amount of quantum information, leading to a substantial gap between the battery capacity of the subsystems and that of the total system. Moreover, when entanglement is maximized, the subsystem battery capacity is compressed to zero, at which point all capacity becomes residual. This may represent the physical essence of how battery-charger entanglement adversely affects the capacity of battery states. The relationship between battery–charger entanglement, residual battery capacity $R(\rho,H)$, and battery state capacity $\mathcal{C}(\rho_b(t);H_b)$, is shown in Figure 1.
	    \begin{figure}[htbp]
	   	\centering
	   	\includegraphics[width=0.5\textwidth]{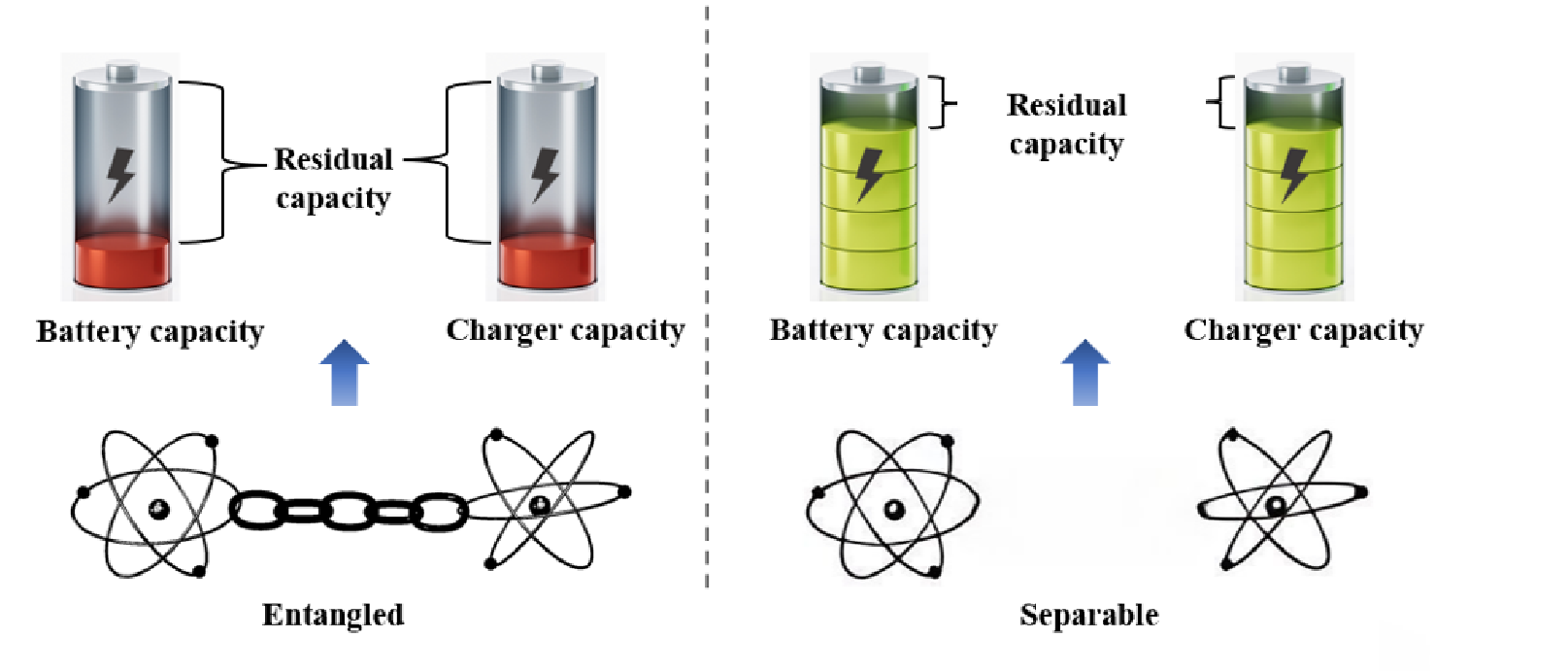}
	   	\vspace{-1em} \caption{When the battery subsystem and the charger subsystem are highly entangled, the residual battery capacity is larger, resulting in a smaller subsystem battery capacity; conversely, when the entanglement between the two subsystems is weak or they are separable, the residual battery capacity is smaller, and the subsystem battery capacity is larger.} \label{Fig.1}
	   \end{figure}
	   
	   The above results indicate that entanglement is detrimental to the capacity of the battery state. However, battery-charger entanglement plays an indispensable role in the dynamical evolution of the quantum battery system.
	   \begin{theorem}
	   During the dynamical evolution of the battery system, the battery ground state never evolves into the excited state in the absence of battery–charger entanglement, nor does the charger excited state return to its ground state. On the other hand, if the entanglement persists, the capacity of the battery state cannot reach its maximum during the evolution.
	   \end{theorem}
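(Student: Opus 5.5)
We work with the parametrization from the proof of Theorem 1: $\rho_b(t)=\mathrm{diag}(p(t),1-p(t))$, and Theorem 1 gives $\mathcal{E}(\rho(t))=2\sqrt{p(t)-p(t)^2}$. Because the evolution stays in the single-excitation subspace spanned by $|01\rangle,|10\rangle$, the global state is
\begin{equation*}
|\psi(t)\rangle=a(t)|01\rangle+b(t)|10\rangle,
\end{equation*}
with $|a(t)|^2=p(t)$ and $|b(t)|^2=1-p(t)$. The reduced charger state is then $\rho_c(t)=\mathrm{diag}(1-p(t),p(t))$ in the basis $\{|0\rangle_c,|1\rangle_c\}$. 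Hence battery and charger populations are locked together, and both statements reduce to statements about $p(t)$.

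\textbf{First claim.} Suppose $\mathcal{E}(\rho(t))=0$ along the evolution. Then $p(t)(1-p(t))=0$, so $p(t)\in\{0,1\}$. The function $p(t)$ is continuous (it is a cosine expression), $p(0)=1$ from Eq.~(\ref{e12}), and it can only take the values $0$ and $1$. By connectedness it must equal $1$ for all $t$. Thus $\rho_b(t)=|0\rangle\langle0|$ stays the ground state of the battery and never becomes the excited state. By the locking above, $\rho_c(t)=|1\rangle\langle1|$ stays excited and never returns to the ground state.

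I would also record the reverse direction. Any transfer of population requires $0<p(t)<1$ at some time, and by the formula above this forces $\mathcal{E}(\rho(t))>0$. So a nonzero fraction of excitation moving from the charger into the battery necessarily passes through entangled states. Even exact full swaps, where $p=0$ and $\mathcal{E}=0$ at an isolated instant, can only be reached through entangled intermediate times. This uses the intermediate value theorem applied to $p$.

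\textbf{Second claim.} By Eq.~(\ref{e6}), for a qubit with levels $\pm\omega_b$ the capacity satisfies
\begin{equation*}
\mathcal{C}(\rho_b;H_b)=2\omega_b|\lambda_1-\lambda_0|\le 2\omega_b,
\end{equation*}
with equality exactly when $\rho_b$ is pure. By Theorem 1, $\mathcal{C}=2\omega_b\sqrt{1-\mathcal{E}^2}$, so the maximum $2\omega_b$ is attained if and only if $\mathcal{E}(\rho(t))=0$. Therefore, if entanglement persists, meaning $\mathcal{E}(\rho(t))>0$ at the time considered, the capacity is strictly below its maximum. If entanglement persists for all $t$, the maximum is never reached.

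\textbf{Main obstacle.} The delicate point is interpretation rather than computation. \emph{Absence of entanglement} has to be read as holding on an interval of times, not at a single instant: at isolated instants $p$ can equal $0$ with $\mathcal{E}=0$, for example at resonance where $p(t)=\cos^2(J_1t)$. So I would state the first claim for an interval, and make explicit the continuity argument that rules out jumps between $p=1$ and $p=0$.
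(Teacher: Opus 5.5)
Your proof is correct, but for the first claim you take a different route from the paper.

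The paper reads ``absence of entanglement'' as a property of the Hamiltonian. The flip-flop coupling is off ($J_1=0$, so $H_I=J_2\sigma_z^b\sigma_z^c$), hence $[H,H_b]=0$. The Heisenberg equation then gives conservation of $\langle H_b\rangle$ (and similarly $\langle H_c\rangle$), so no population moves. You instead read it as a property of the state: $\mathcal{E}(\rho(t))=0$ on a time interval containing $t=0$. For states $a|01\rangle+b|10\rangle$ this forces $p(t)\in\{0,1\}$, and you conclude $p\equiv 1$ from continuity and $p(0)=1$.

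The paper's route exhibits the physical mechanism (each subsystem's energy is a constant of motion) and holds for any initial state. However, it simply asserts that vanishing entanglement means $J_1=0$. Your route works directly with the quantity named in the theorem and makes explicit the interval reading the statement actually needs. Your resonant example shows the pointwise reading is false: with $p(t)=\cos^2(J_1t)$, the battery is fully excited with $\mathcal{E}=0$ at $t=\pi/(2J_1)$.

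Combined with Eq.~(\ref{e12}), your argument also fills the paper's gap. For $J_1\neq 0$, $1-p(t)=J_1^2[1-\cos(e_1-e_2)t]/\{2[J_1^2+(\omega_b-\omega_c)^2]\}$ vanishes only at isolated times. So entanglement vanishes on an interval exactly when $J_1=0$, and the two readings coincide.

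Your intermediate-value remark is the precise version of the paper's informal claim that entanglement is needed for energy flow. Your second claim is the same corollary of Theorem 1 that the paper uses.
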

	   \begin{proof}
	   	The condition of vanishing entanglement implies that the interaction term $H_I$	given by Eq. (\ref{e10}) reduces to the form with $J_1=0$, i.e., $H_I=J_2\sigma_z^b\otimes\sigma_z^c$. The energy expectation value of the battery, $\langle H_b\rangle$, is governed by the Heisenberg equation,
	   	\begin{equation*}
	   	\frac{d}{dt}\langle H_b\rangle=\frac{i}{\hbar}\langle[H,H_b]\rangle.
	   	\end{equation*}
	   	It is straightforward to verify that the commutator $[H,H_b]=0$, which leads to $d\langle H_b\rangle/dt=0$. This implies that the expectation value of the battery energy is constant in time, indicating energy conservation in the battery subsystem. Similarly, the charger energy is conserved, implying that transitions cannot occur for either the battery ground state or the charger excited state. Finally, the second part of the theorem is a direct corollary of Theorem 1, and we omit the details here.
	   \end{proof}
	   During the dynamic evolution of a battery system, entanglement is essential for enabling energy flow between the battery and the charger, yet it cannot persist if the battery state capacity is to reach its peak. This apparent paradox reveals how the intertwined roles of entanglement in different aspects collectively determine the performance of quantum batteries.
	   
	   \subsection{Quantum steering and Bell nonlocality}
	   
	   Furthermore, we can use entanglement to explore the relationship between the battery state capacity and the quantum steering and Bell nonlocality involving the battery and charger spins.  For this purpose, we need to calculate the correlation matrix $T$. To simplify the calculation, we denote the time-evolved state of the total system as $|\psi(t)\rangle=\alpha|01\rangle+\beta|10\rangle$. Then one have
	   \begin{equation}\label{e20}
	   	T(\rho(t))=\left(
	   	\begin{array}{ccc}
	   		2\text{Re}(\alpha^*\beta) & -2\text{Im}(\alpha^*\beta) & 0\\
	   		2\text{Im}(\alpha^*\beta) & 2\text{Re}(\alpha^*\beta) & 0\\
	   		0 & 0 & -1\\
	   	\end{array}
	   	\right ).
	   \end{equation} 
	   Thus $T(\rho(t))T(\rho(t))^\tau$ can be written as
	   \begin{equation}\label{e21}
	   	T(\rho(t))T(\rho(t))^\tau=\left(
	   	\begin{array}{ccc}
	   		4|\alpha|^2|\beta|^2 & 0 & 0\\
	   		0 & 4|\alpha|^2|\beta|^2 & 0\\
	   		0 & 0 & 1\\
	   	\end{array}
	   	\right ),
	   \end{equation} 
	   where
	   \begin{equation*}
	   	\begin{split}
	   	\alpha&=\frac{e^{-ie_1t\xi_1-e^{-ie_2t}\xi_2}}{\xi_1-\xi_2},\,\,
	   	\beta=\frac{e^{-ie_1t-e^{-ie_2t}}}{\xi_1-\xi_2}.
	   	\end{split}
	   \end{equation*}
	   Now, we can establish a relationship between the battery state capacity $\mathcal{C}(\rho_b(t);H_b)$ and quantum steering $\mathcal{S}(\rho(t))$.
	   \begin{theorem}
	   	The relationship between the battery state capacity $\mathcal{C}(\rho_b(t);H_b)$ and quantum steering $\mathcal{S}(\rho(t))$ during the dynamic evolution of the quantum battery system is given by:
	   	\begin{equation}\label{e22}
	   	\mathcal{C}(\rho_b(t);H_b)=2\omega_b\sqrt{1-\frac{\mathcal{S}(\rho(t))}{2}}.
	   	\end{equation}
	   \end{theorem}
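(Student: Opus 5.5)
The plan is to reduce the statement to Theorem 1 by writing the steering measure as a function of the concurrence. Once $\mathcal{S}(\rho(t))$ is expressed through $\mathcal{E}(\rho(t))$, Eq. (\ref{e13}) gives the capacity directly.

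\textbf{Step 1: compute $\mathcal{S}$.} From Eq. (\ref{e21}), the trace of $T(\rho(t))T(\rho(t))^\tau$ is $8|\alpha|^2|\beta|^2+1$. By the definition (\ref{e19}) this gives
\begin{equation*}
\mathcal{S}(\rho(t))=8|\alpha|^2|\beta|^2 .
\end{equation*}

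\textbf{Step 2: link $|\alpha|^2|\beta|^2$ to the concurrence.} For $|\psi(t)\rangle=\alpha|01\rangle+\beta|10\rangle$, the reduced battery state is $\mathrm{diag}(|\alpha|^2,|\beta|^2)$. Comparing with the parametrization in the proof of Theorem 1, we have $p(t)=|\alpha|^2$ and $1-p(t)=|\beta|^2$, using $|\alpha|^2+|\beta|^2=1$. Then
\begin{equation*}
\mathcal{E}(\rho(t))^2=4(p(t)-p(t)^2)=4|\alpha|^2|\beta|^2,
\end{equation*}
and therefore $\mathcal{S}(\rho(t))=2\mathcal{E}(\rho(t))^2$.

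\textbf{Step 3: substitute into Theorem 1.} Putting $\mathcal{E}(\rho(t))^2=\mathcal{S}(\rho(t))/2$ into Eq. (\ref{e13}) yields Eq. (\ref{e22}).

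The only delicate point is Step 1. We need Eq. (\ref{e20}) to be correct, specifically that $t_{33}=-1$, which holds because both $|01\rangle$ and $|10\rangle$ give $\langle\sigma_z\sigma_z\rangle=-1$. We also need the off-diagonal block to have trace contribution $4|\alpha^*\beta|^2\cdot 2$. Checking this requires a short computation of $\langle\sigma_x\sigma_x\rangle$, $\langle\sigma_y\sigma_y\rangle$ and the cross terms on the two-dimensional subspace $\mathrm{span}\{|01\rangle,|10\rangle\}$. This computation is routine, and $\mathcal{S}$ depends only on the trace, so the basis conventions for the cross terms do not matter.
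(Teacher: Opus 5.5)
Your proposal is correct and follows the paper's proof essentially step for step: compute $\mathcal{S}(\rho(t))=\mathrm{Tr}[T(\rho(t))T(\rho(t))^\tau]-1=8|\alpha|^2|\beta|^2$ from Eq.~(\ref{e21}), show $\mathcal{E}(\rho(t))^2=4|\alpha|^2|\beta|^2$, conclude $\mathcal{S}(\rho(t))=2\mathcal{E}(\rho(t))^2$, and substitute into Eq.~(\ref{e13}). The only cosmetic difference is in how the concurrence is obtained: the paper uses $\mathrm{Tr}(\rho_b^2)=|\alpha|^4+|\beta|^4$ directly, while you identify $p(t)=|\alpha|^2$ and reuse $\mathcal{E}=2\sqrt{p-p^2}$ from Theorem 1, which amounts to the same computation.
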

	   \begin{proof}
	   	The concurrence of entanglement for $\rho(t)$ can be rewritten in terms of the parameters $\alpha$ and $\beta$ as:
	   	\begin{equation*}
	   	\mathcal{E}(\rho(t))=\sqrt{2(1-|\alpha|^4-|\beta|^4)}=\sqrt{4|\alpha|^2.|\beta|^2}=2|\alpha|.|\beta|.
	   	\end{equation*}
	   	 From (\ref{e19}), the $F_3$ steering for $\rho(t)$ is 
	   	$$\mathcal{S}(\rho(t))=\text{Tr}[T(\rho(t))T(\rho(t))^\tau]-1=8|\alpha|^2|\beta|^2.
	   	$$
	   	Thus, we obtain the relation between the entanglement and the steering: $\mathcal{S}(\rho(t))=2\mathcal{E}(\rho(t))^2$. Substituting this relation into Eq. (\ref{e13}) completes the proof of the theorem.
	   \end{proof}

	   From Eq. (\ref{e21}), we can present the relationship between Bell nonlocality and the battery state capacity in the following.
	   \begin{theorem}
	   	The relation between the battery state capacity $\mathcal{C}(\rho_b(t);H_b)$ and Bell nonlocality $\mathcal{B}(\rho(t))$ during the time evolution of the quantum battery system satisfy:
	   	\begin{equation}\label{e25}
	   	\mathcal{C}(\rho_b(t);H_b)=2\omega_b\sqrt{1-\mathcal{B}(\rho(t))-\frac{\mathcal{B}(\rho(t))^2}{4}}.
	   	\end{equation}
	   \end{theorem}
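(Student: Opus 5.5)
From Eq. (\ref{e21}), $T(\rho(t))T(\rho(t))^\tau$ is diagonal, with eigenvalues $4|\alpha|^2|\beta|^2$ (twice) and $1$. I would express $\mathcal{B}(\rho(t))$ in terms of the concurrence $\mathcal{E}(\rho(t))$ and then substitute into Theorem 1, exactly as Theorem 5 did for steering.

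First, I need the two largest eigenvalues $\iota_1,\iota_2$. Since $|\alpha|^2+|\beta|^2=1$, AM--GM gives $4|\alpha|^2|\beta|^2\le(|\alpha|^2+|\beta|^2)^2=1$. Hence the two largest eigenvalues are $1$ and $4|\alpha|^2|\beta|^2$. Equality cases cause no ambiguity, because any choice still leaves the sum $1+4|\alpha|^2|\beta|^2$. Identifying this ordering is the only step requiring care, and it is settled by this inequality.

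Next, the proof of Theorem 5 gives $\mathcal{E}(\rho(t))=2|\alpha||\beta|$, so $4|\alpha|^2|\beta|^2=\mathcal{E}(\rho(t))^2$. Then Eq. (\ref{e24}) gives
\begin{equation*}
\mathcal{B}(\rho(t))=2\sqrt{1+\mathcal{E}(\rho(t))^2}-2 .
\end{equation*}
Inverting, $\sqrt{1+\mathcal{E}^2}=1+\mathcal{B}/2$, which is legitimate because $1+\mathcal{B}/2\ge 1>0$. Squaring yields
\begin{equation*}
\mathcal{E}(\rho(t))^2=\mathcal{B}(\rho(t))+\frac{\mathcal{B}(\rho(t))^2}{4}.
\end{equation*}

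Finally, substituting this into Eq. (\ref{e13}) of Theorem 1 gives $\mathcal{C}(\rho_b(t);H_b)=2\omega_b\sqrt{1-\mathcal{B}(\rho(t))-\mathcal{B}(\rho(t))^2/4}$. This is Eq. (\ref{e25}). The argument under the square root is nonnegative because it equals $1-\mathcal{E}^2\ge0$.
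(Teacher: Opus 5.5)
Your proposal is correct and follows essentially the same route as the paper: it takes $\iota_1+\iota_2=1+4|\alpha|^2|\beta|^2$ from Eq. (\ref{e21}), rewrites this as $\mathcal{B}(\rho(t))=2\sqrt{1+\mathcal{E}(\rho(t))^2}-2$, inverts to $\mathcal{E}^2=\mathcal{B}+\mathcal{B}^2/4$, and substitutes into Eq. (\ref{e13}). Your AM--GM check that $4|\alpha|^2|\beta|^2\le 1$, which confirms that $1$ and $4|\alpha|^2|\beta|^2$ are the two largest eigenvalues, and your nonnegativity remarks are details the paper leaves implicit.
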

	   \begin{proof}
	   	From (\ref{e20}) and (\ref{e24}), we have
	   	\begin{equation*}
	   	\begin{split}
	   	\mathcal{B}(\rho(t))&=2\sqrt{\iota_1+\iota_2}-2\\
	   	                    &=2\sqrt{1+4|\alpha|^2|\beta|^2}-2\\
	   	                    &=2\sqrt{1+\mathcal{E}(\rho(t))^2}-2.
	   	\end{split}
	   	\end{equation*}
	   	This means that $$\mathcal{E}(\rho(t))=\sqrt{(1+\frac{\mathcal{B}(\rho(t))}{2})^2-1}.$$
	   	Substituting this relation into Eq. (\ref{e13}) completes the proof.
	   \end{proof}
	   
	   The main results mentioned above indicate that whether we consider battery-charger entanglement, quantum steering, or Bell nonlocality, all of these quantum resources show a negative correlation with the battery state capacity. Using entanglement as an example, we plot the time evolution of the battery-charger entanglement $\mathcal{E}$ and battery capacity $\mathcal{C}$, as shown in Figure 2.
	   \begin{figure}[htbp]
	   	\centering
	   	\includegraphics[width=0.5\textwidth]{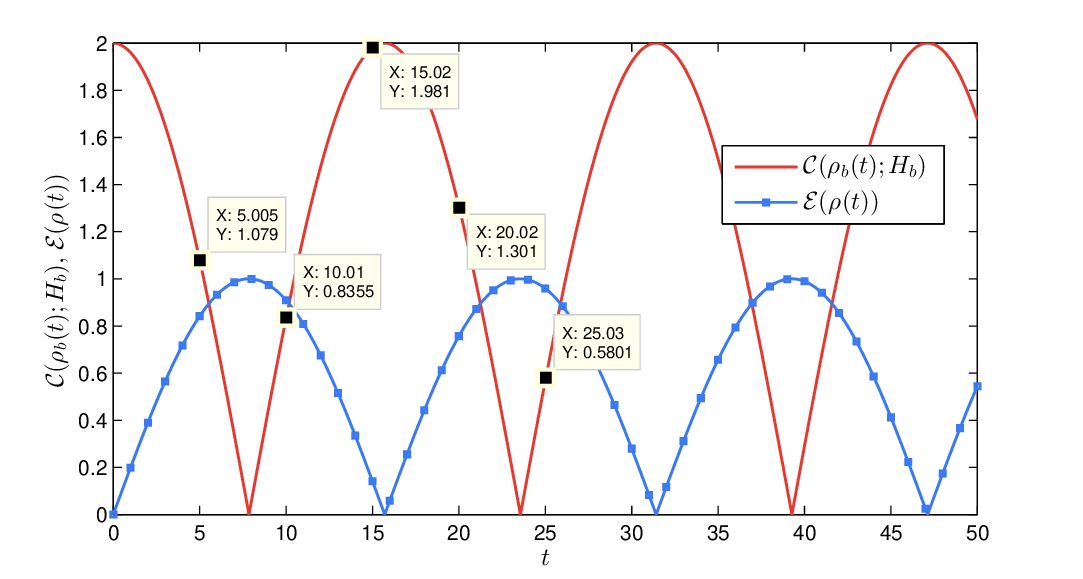}
	   	\vspace{-1em} \caption{The figure shows the relationship between the battery state capacity $\mathcal{C}(\rho_b(t);H_b)$ and the battery-charger entanglement $\mathcal{E}(\rho(t))$ during the battery system dynamic evolution. The battery state capacity decreases as the quantum entanglement. The parameters are set to $\omega_b=\omega_c=1$, $J_1=J_2=0.1$.} \label{Fig.2}  
	   \end{figure}
	   
	   To verify the analytical results, we perform independent numerical simulations. The parameters are set exactly the same as those in Figure 2: $\omega_b=\omega_c=1$, $J_1=J_2=0.1$. We directly integrate the von Neumann equation using the mesolve function in the QuTiP toolbox. The integration employs the adaptive-step Dormand–Prince algorithm over the time interval $t\in[0, 50]$ with 1000 uniformly sampled points (step size $\Delta t\approx0.05$). The table I compares the analytical and numerical values of the battery capacity at given times.
	   \begin{table}[htbp]
	   	\begin{tabular*}{0.5\textwidth}{@{}@{\extracolsep{\fill}}lllllllllllll@{}}
	   		\hline
	   		\hline
	   		Time $t$   &Analytical solution $\mathcal{C}_\text{ana}$   &Numerical solution $\mathcal{C}_\text{num}$\\
	   		\hline
	   		5.005   &~~~~1.0789   &~~~~1.0789  \\
	   		
	   		10.01   &~~~~0.8359   &~~~~0.8359  \\
	   		
	   		15.02   &~~~~1.9811   &~~~~1.9808  \\
	   		
	   		20.02   &~~~~1.3012   &~~~~1.3012  \\
	   		
	   		25.03   &~~~~0.5788   &~~~~0.5769  \\
	   		\hline
	   		\hline
	   	\end{tabular*}
	   	\caption{The times $t$ are taken from the numerical sampling points, analytical values are computed from Eq. (11), and numerical values are obtained via QuTiP simulations.}
	   \end{table}
	   
	   \subsection{Quantum coherence and imaginarity}
	   Recall that when discussing the correlation matrix, we parameterized the total system’s evolved state as $|\psi(t)\rangle=\alpha|01\rangle+\beta|10\rangle$, where
	   \begin{equation*}
	   	\begin{split}
	   		\alpha&=\frac{e^{-ie_1t\xi_1-e^{-ie_2t}\xi_2}}{\xi_1-\xi_2},\,\,
	   		\beta=\frac{e^{-ie_1t-e^{-ie_2t}}}{\xi_1-\xi_2}.
	   	\end{split}
	   \end{equation*}
	   Then, the coherent $l_1$ norm of the time-evolved state $\rho(t)=|\psi(t)\rangle\langle\psi(t)|$ can be written as:
	   \begin{equation*}
	   C_1(\rho(t))=\sum_{i\neq j}|\rho(t)_{ij}|=|\alpha\beta^*|+|\alpha^*\beta|=2|\alpha|.|\beta|=\mathcal{E}(\rho(t)).
	   \end{equation*}
	   This implies that the coherent $l_1$-norm is numerically equal to the entanglement concurrence $\mathcal{E}$ during the dynamical evolution of the quantum battery system. Thus, according to Theorem 1, we directly obtain the trade-off between quantum coherence and the battery state capacity.
	   \begin{theorem}
	   	The trade-off between the battery state capacity $\mathcal{C}(\rho_b(t);H_b)$ and quantum coherence $C_1(\rho(t))$ during the time evolution of the quantum battery system is given by
	   	\begin{equation}\label{c2}
	   	\mathcal{C}(\rho_b(t);H_b)=2\omega_b\sqrt{1-C_1(\rho(t))^2}.
	   	\end{equation}
	   \end{theorem}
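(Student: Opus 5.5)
The plan is to reduce the statement to Theorem 1 by showing that $C_1(\rho(t))=\mathcal{E}(\rho(t))$ at every time $t$, and then substituting this identity into Eq. (\ref{e13}).

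First I write the evolved state as $|\psi(t)\rangle=\alpha|01\rangle+\beta|10\rangle$, with $\alpha,\beta$ as given above and $|\alpha|^2+|\beta|^2=1$. This form is exact, because the dynamics is confined to the single-excitation subspace spanned by $|01\rangle$ and $|10\rangle$. In the computational basis $\{|00\rangle,|01\rangle,|10\rangle,|11\rangle\}$, the density matrix $\rho(t)=|\psi(t)\rangle\langle\psi(t)|$ then has only two nonzero off-diagonal entries, $\alpha\beta^*$ and $\alpha^*\beta$, both in the $|01\rangle,|10\rangle$ block. Eq. (\ref{c1}) therefore gives
\begin{equation*}
C_1(\rho(t))=|\alpha\beta^*|+|\alpha^*\beta|=2|\alpha||\beta|.
\end{equation*}

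Second, I compute the concurrence from Eq. (\ref{e1}). The reduced state is $\rho_b=\mathrm{diag}(|\alpha|^2,|\beta|^2)$, so $\mathrm{Tr}\rho_b^2=|\alpha|^4+|\beta|^4=1-2|\alpha|^2|\beta|^2$. Hence $\mathcal{E}(\rho(t))=\sqrt{4|\alpha|^2|\beta|^2}=2|\alpha||\beta|$, exactly as in the proof of Theorem 5. Comparing the two expressions gives $C_1(\rho(t))=\mathcal{E}(\rho(t))$ for all $t$.

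Finally, substituting $\mathcal{E}(\rho(t))=C_1(\rho(t))$ into Eq. (\ref{e13}) of Theorem 1 yields Eq. (\ref{c2}).

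I do not expect a real obstacle. The only point needing care is confirming that no other off-diagonal entries appear, that is, that the $|00\rangle$ and $|11\rangle$ components vanish for all $t$. This holds because the Hamiltonian conserves the total excitation number and the initial state $|01\rangle$ lies in the single-excitation sector.
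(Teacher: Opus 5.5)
Your proposal is correct and follows essentially the same route as the paper: you compute $C_1(\rho(t))=|\alpha\beta^*|+|\alpha^*\beta|=2|\alpha||\beta|=\mathcal{E}(\rho(t))$ for the single-excitation pure state and then substitute this identity into Eq.~(\ref{e13}) of Theorem 1. Your explicit remark that excitation-number conservation rules out $|00\rangle$ and $|11\rangle$ components, so no other off-diagonal entries appear, is a justification the paper leaves implicit.
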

	   
	   	Under a given basis, quantum coherence is characterized by the presence of off-diagonal elements, while imaginarity is a subset of coherence that further requires these off-diagonal elements to have complex phases. All states with imaginarity possess coherence, but the converse is not necessarily true, making imaginarity a more specific quantum resource than coherence. Complex numbers are indispensable in quantum mechanics, and the discovery of experimental scenarios \cite{mord} that cannot be accurately modeled using only real-valued quantum theories has prompted increased attention to the study of imaginarity. Here, we establish a trade-off relation between battery state capacity $\mathcal{C}(\rho_b(t);H_b)$ and quantum imaginarity $\mathcal{I}(\rho(t))$ during the dynamical evolution of the battery system.
	   \begin{theorem}
	   	During the temporal evolution of the battery system, the battery state capacity and quantum imaginarity satisfy the following trade-off relation:
	   	\begin{equation}\label{e30}
	   		\mathcal{C}(\rho_b(t);H_b)=2\omega_b\sqrt{1-4\text{Re}(\alpha\beta^*)^2-\mathcal{I}(\rho(t))^2},
	   	\end{equation}
	   	where
	   	\begin{equation*}
	   		\text{Re}(\alpha\beta^*)=\frac{\frac{(\omega_b-\omega_c)}{J_1}(1-\cos(e_1-e_2)t)}{\frac{2[J_1^2+(\omega_b-\omega_c)^2]}{J_1^2}}.
	   	\end{equation*}
	   \end{theorem}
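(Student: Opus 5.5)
The plan is to reduce the statement to Theorem 1 (equivalently the coherence relation (\ref{c2})), by splitting the single off-diagonal element of $\rho(t)=|\psi(t)\rangle\langle\psi(t)|$ into its real and imaginary parts. With $|\psi(t)\rangle=\alpha|01\rangle+\beta|10\rangle$, the only nonzero off-diagonal entries are $\rho_{01,10}=\alpha\beta^*$ and its conjugate. Hence definition (\ref{e4}) gives
\begin{equation*}
\mathcal{I}(\rho(t))=2|\text{Im}(\alpha\beta^*)|.
\end{equation*}
The computation preceding (\ref{c2}) gives $C_1(\rho(t))=\mathcal{E}(\rho(t))=2|\alpha\beta^*|$. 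Using $|\alpha\beta^*|^2=\text{Re}(\alpha\beta^*)^2+\text{Im}(\alpha\beta^*)^2$, we obtain
\begin{equation*}
\mathcal{E}(\rho(t))^2=4\text{Re}(\alpha\beta^*)^2+\mathcal{I}(\rho(t))^2.
\end{equation*}
Substituting this into Eq. (\ref{e13}) yields Eq. (\ref{e30}) directly.

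The remaining work is to verify the closed form of $\text{Re}(\alpha\beta^*)$. I would use $\alpha=(e^{-ie_1t}\xi_1-e^{-ie_2t}\xi_2)/(\xi_1-\xi_2)$ and $\beta=(e^{-ie_1t}-e^{-ie_2t})/(\xi_1-\xi_2)$, reading the paper's typeset formulas for $\alpha,\beta$ as these intended expressions. Expanding gives
\begin{equation*}
\alpha\beta^*=\frac{\xi_1+\xi_2-\xi_1e^{-i(e_1-e_2)t}-\xi_2e^{i(e_1-e_2)t}}{(\xi_1-\xi_2)^2}.
\end{equation*}
Its real part is $(\xi_1+\xi_2)(1-\cos(e_1-e_2)t)/(\xi_1-\xi_2)^2$.

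Next I insert $\xi_1+\xi_2=2(\omega_b-\omega_c)/J_1$ and $\xi_1-\xi_2=2\sqrt{J_1^2+(\omega_b-\omega_c)^2}/J_1$. Then $(\xi_1-\xi_2)^2=4[J_1^2+(\omega_b-\omega_c)^2]/J_1^2$, which gives exactly the stated expression for $\text{Re}(\alpha\beta^*)$ after cancelling a factor of $2$. As a by-product, the imaginary part is $(\xi_2-\xi_1)\sin(e_1-e_2)t/(\xi_1-\xi_2)^2$, so $\mathcal{I}(\rho(t))=|\sin(e_1-e_2)t|\,|J_1|/\sqrt{J_1^2+(\omega_b-\omega_c)^2}$. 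This could serve as a consistency check against Eq. (\ref{e12}) via $4|\alpha|^2|\beta|^2=4p(1-p)$.

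The main obstacle is only bookkeeping. The typeset formulas for $\alpha$ and $\beta$ have misplaced braces and must be read as the intended expressions above. Sign conventions (the sign of $J_1$, and whether the entry is $\alpha\beta^*$ or $\alpha^*\beta$) do not matter, since only $\text{Re}(\alpha\beta^*)^2$ and $|\text{Im}|$ enter. I would also remark that under resonance $\omega_b=\omega_c$ the real part vanishes and imaginarity coincides with coherence. Detuning makes $\text{Re}(\alpha\beta^*)\neq0$, which is why vanishing imaginarity need not give peak capacity.
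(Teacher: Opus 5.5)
Your proposal is correct and follows the paper's proof essentially step for step. You write $\mathcal{I}(\rho(t))=2|\text{Im}(\alpha\beta^*)|$, split $|\alpha|^2|\beta|^2=\text{Re}(\alpha\beta^*)^2+\text{Im}(\alpha\beta^*)^2$, substitute into (\ref{c2}) (equivalently (\ref{e13}), since $C_1=\mathcal{E}$ here), and get $\text{Re}(\alpha\beta^*)$ from $\xi_1+\xi_2$ and $(\xi_1-\xi_2)^2$, exactly as the paper does. One harmless sign slip: the imaginary part is $(\xi_1-\xi_2)\sin(e_1-e_2)t/(\xi_1-\xi_2)^2$, not with $\xi_2-\xi_1$, but only $|\text{Im}(\alpha\beta^*)|$ enters, and your $\mathcal{I}(\rho(t))=|J_1||\sin(e_1-e_2)t|/\sqrt{J_1^2+(\omega_b-\omega_c)^2}$ agrees with the paper's.
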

	   \begin{proof}
	   	From the expression for the evolved state of the total system, $\rho(t)=|\psi(t)\rangle\langle\psi(t)|$, the quantum imaginarity $\mathcal{I}(\rho(t))$ can be expressed as
	   	\begin{equation*}
	   		\begin{split}
	   			\mathcal{I}(\rho(t))&=\sum_{i\neq j}|\text{Im}(\rho(t)_{ij})|\\
	   			&=2\frac{\text{Im}[(e^{-ie_1t}\xi_1-e^{-ie_2t}\xi_2)(e^{ie_1t}-e^{ie_2t})]}{(\xi_1-\xi_2)^2}\\
	   			&=\frac{\frac{\sqrt{J_1^2+(\omega_b-\omega_c)^2}}{J_1}|\sin(e_1-e_2)t|}{1+\frac{(\omega_b-\omega_c)^2}{J_1^2}}.
	   		\end{split}
	   	\end{equation*}
	   	Next, we explore its link to quantum coherence,
	   	\begin{equation*}
	   		\begin{split}
	   			|\alpha|^2|\beta|^2&=(\alpha\beta^*).(\alpha^*\beta)=(\alpha\beta^*).(\alpha\beta^*)^*\\
	   			&=(\text{Re}(\alpha\beta^*)+i\text{Im}(\alpha\beta^*)).(\text{Re}(\alpha\beta^*)-i\text{Im}(\alpha\beta^*))\\
	   			&=\text{Re}(\alpha\beta^*)^2+\text{Im}(\alpha\beta^*)^2\\
	   			&=\text{Re}(\alpha\beta^*)^2+|\text{Im}(\alpha\beta^*)|^2\\
	   			&=\text{Re}(\alpha\beta^*)^2+\frac{\mathcal{I}(\rho(t))^2}{4}.
	   		\end{split}
	   	\end{equation*}
	   	This means that 
	   	\begin{equation}\label{e31}
	   		C_1(\rho(t))=2\sqrt{\text{Re}(\alpha\beta^*)^2+\frac{\mathcal{I}(\rho(t))^2}{4}}.
	   	\end{equation}
	   	By calculation we can obtain that
	   	\begin{equation*}
	   		\text{Re}(\alpha\beta^*)=\frac{\frac{(\omega_b-\omega_c)}{J_1}(1-\cos(e_1-e_2)t)}{\frac{2[J_1^2+(\omega_b-\omega_c)^2]}{J_1^2}}.
	   	\end{equation*}
	   	Finally, substituting Eq. (\ref{e31}) into Eq. (\ref{c2}) completes the proof of the theorem.
	   \end{proof}
	   As evident from Eq. (\ref{e30}), an increase in the imaginarity resource necessarily leads to a decrease in the battery state capacity. This can be understood by noting that imaginarity here serves to enhance quantum coherence. In other words, under these circumstances, imaginarity constitutes a component of coherence. 
	   
	   In addition, while imaginarity aligns with entanglement, steering, Bell nonlocality, and coherence in adversely affecting the battery state capacity, it differs from the latter four in that capacity does not necessarily peak when imaginarity vanishes. This actually depends on whether the battery system is detuned or resonant. Setting $\mathcal{I}(\rho(t))=0$ in Eq. (\ref{e30}) yields
	   \begin{equation*}
	   	\mathcal{C}(\rho_b(t);H_b)=2\omega_b\sqrt{1-[\frac{J_1(\omega_b-\omega_c)(1-\cos(e_1-e_2)t)}{[J_1^2+(\omega_b-\omega_c)^2]}]^2}
	   \end{equation*}
	   When the system is resonant ($\omega_b=\omega_c$), the off-diagonal elements of the density matrix $\rho(t)$ during dynamical evolution are purely imaginary. Vanishing imaginarity implies vanishing coherence, allowing the battery capacity to reach its peak value of $2\omega_b$. In contrast, when the system is detuned ($\omega_b\neq\omega_c$), the off-diagonal elements are no longer purely imaginary, and the coherence contains the contributions from real part of the off-diagonal elements. Consequently, the battery capacity cannot attain the maximum value of $2\omega_b$. Furthermore, we examine the dynamics of the battery capacity $\mathcal{C}(\rho_b(t);H_b)$, the quantum coherence $C_1(\rho(t))$, and the imaginarity $\mathcal{I}(\rho(t))$ for system detunings $\Delta=|\omega_b-\omega_c|=0.2, 0.3, 0.4$, and 0.5, as shown in Figure 3.

	   \begin{figure*}[htbp]
	   	\centering
	   	\includegraphics[width=1\textwidth]{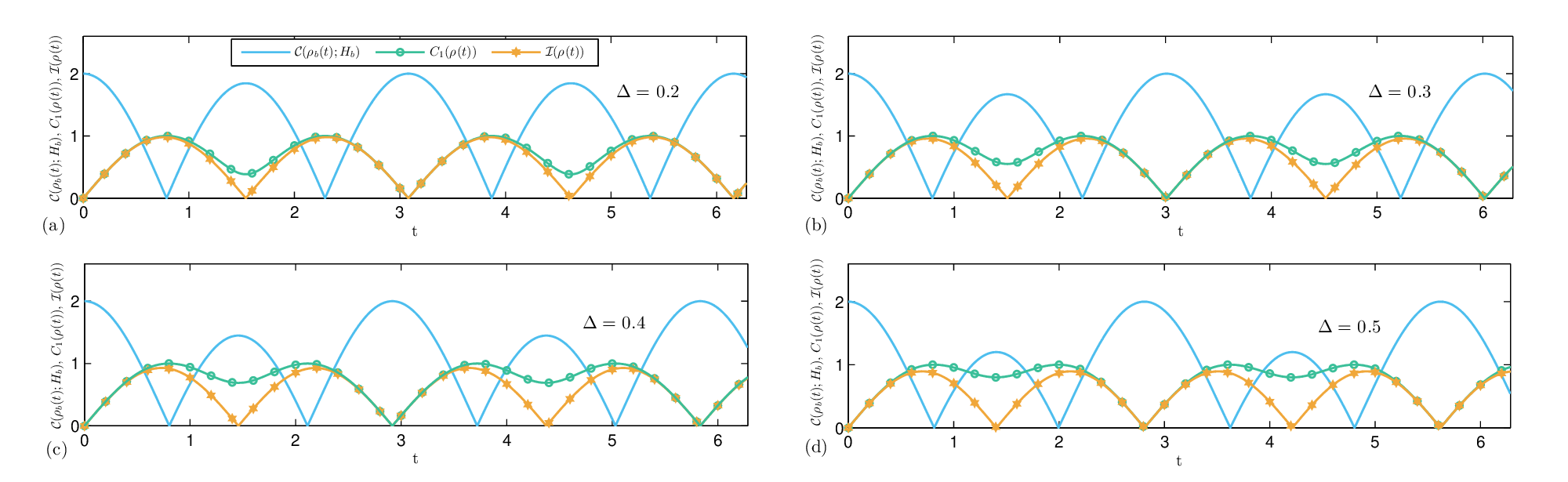}
	   	\vspace{-1em} \caption{Panels (a), (b), (c), and (d) show the dynamics of the battery capacity $\mathcal{C}(\rho_b(t);H_b)$, the quantum coherence $C_1(\rho(t))$, and the imaginarity $\mathcal{I}(\rho(t))$ for system detunings $\Delta=0.2, 0.3, 0.4$, and 0.5, respectively. The other parameters are set to $\omega_b=J_1=J_2=1$.} \label{Fig.3}
	   \end{figure*}

	   \subsection{Quantum state texture}
	   In contrast to entanglement, the time evolution of the QST matches precisely that of the battery capacity. We can also investigate the mathematical link between the battery capacity and the QST. 
		 \begin{theorem}
		 During the time evolution of the quantum battery system, the QST measure $\mathcal{T}_{tr}(\rho_b(t))$ and the battery state capacity $\mathcal{C}(\rho_b(t);H_b)$ at time $t$ satisfy:
		 \begin{equation}\label{e26}
		 	\mathcal{C}(\rho_b(t);H_b)=2\omega_b\sqrt{4\mathcal{T}_{tr}(\rho_b(t))^2-1}.
		 \end{equation}
		 \end{theorem}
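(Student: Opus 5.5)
The plan is to work directly with the diagonal parametrization of the battery state used in the proof of Theorem 1, namely
$$\rho_b(t)=\left(\begin{array}{cc} p(t) & 0\\ 0 & 1-p(t)\end{array}\right),$$
for which we already know $\mathcal{C}(\rho_b(t);H_b)=2\omega_b|1-2p(t)|$. The whole proof then reduces to computing $\mathcal{T}_{tr}(\rho_b(t))$ as an explicit function of $p(t)$ and inverting. No dynamical information beyond Eq. (\ref{e12}) is needed.

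\textbf{Step 1: the difference matrix.} For a qubit ($d=2$) the texture-free state is $f_1=\frac12\left(\begin{array}{cc}1&1\\1&1\end{array}\right)$. Hence
$$\rho_b(t)-f_1=\left(\begin{array}{cc} p-\frac12 & -\frac12\\ -\frac12 & \frac12-p\end{array}\right).$$
This matrix is Hermitian and traceless. Its eigenvalues are therefore $\pm\lambda$ with $\lambda^2=-\det(\rho_b-f_1)=(p-\frac12)^2+\frac14$.

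\textbf{Step 2: the texture measure.} The trace norm is the sum of the absolute values of the eigenvalues, giving $\text{Tr}|\rho_b-f_1|=2\lambda$. Consequently
$$\mathcal{T}_{tr}(\rho_b(t))=\sqrt{\left(p(t)-\tfrac12\right)^2+\tfrac14}.$$

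\textbf{Step 3: eliminate $p(t)$.} From Step 2, $4\mathcal{T}_{tr}^2-1=4(p-\frac12)^2=(1-2p)^2$. Taking the nonnegative square root gives $|1-2p(t)|=\sqrt{4\mathcal{T}_{tr}(\rho_b(t))^2-1}$. Substituting this into $\mathcal{C}=2\omega_b|1-2p(t)|$ yields Eq. (\ref{e26}). As a consistency check, $\mathcal{T}_{tr}\ge \frac12$ always holds, so the square root is real.

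The only point requiring care is Step 2, in particular using the eigenvalue characterization of the trace norm rather than entrywise absolute values. Beyond that, the argument mirrors Theorem 1, where the sign ambiguity of $1-2p$ is likewise removed by the absolute value.
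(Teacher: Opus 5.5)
Your proof is correct and follows the paper's argument. The paper likewise uses the diagonal parametrization of $\rho_b(t)$ with $\mathcal{C}(\rho_b(t);H_b)=2\omega_b|1-2p(t)|$, states $\mathcal{T}_{tr}(\rho_b(t))=\frac{1}{2}\sqrt{(1-2p(t))^2+1}$ (identical to your $\sqrt{(p-\frac12)^2+\frac14}$), and inverts; your computation of the eigenvalues $\pm\lambda$ of the traceless matrix $\rho_b-f_1$ supplies the trace-norm step the paper leaves implicit.
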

		 \begin{proof}
		 For simplicity, we adopt the same parametrization for $\rho_b(t)$ as used in the proof of Theorem 1. Thus, the battery capacity is given by $\mathcal{C}(\rho_b(t);H_b)=2\omega_b|1-2p(t)|$. Eq. (\ref{e3}) indicates that $\mathcal{T}_{tr}(\rho_b(t))=\sqrt{(1-2p(t))^2+1}/2$. This yields a relation between the battery capacity and the QST:
		 \begin{equation*}
		 \begin{split}
		 \mathcal{T}_{tr}(\rho_b(t))&=\frac{\sqrt{(1-2p(t))^2+1}}{2}=\frac{\sqrt{|1-2p(t)|^2+1}}{2}\\
		 &=\frac{1}{2}\sqrt{\frac{\mathcal{C}(\rho_b(t);H_b)^2}{4\omega_b^2}+1}.
		 \end{split}
		 \end{equation*}
		 \end{proof}
		 The QST measure $\mathcal{T}_{tr}$ quantifies the roughness or uneveness of the density matrix, which is directly related to the eigenvalue distribution of the state. Positive semidefiniteness is a fundamental property of density matrices. Based on this, we say that $\rho$ is majorized by $\sigma$ ($\rho\prec\sigma$) if
		 \begin{equation*}
		 	\begin{split}
		 		&\sum_{i=1}^{k}\lambda_{d-i+1}<\sum_{i=1}^{k}\eta_{d-i+1},\,1\leq k<d.\\
		 		&\sum_{i=1}^{d}\lambda_{d-i+1}=\sum_{i=1}^{d}\eta_{d-i+1}.
		 	\end{split}
		 \end{equation*} 
		 Here, $\{\lambda_i\}$ and $\{\eta_i\}$ denote the eigenvalues of $\rho$ and $\sigma$, respectively, arranged in ascending order, and $d$ is their shared dimension. The more uneven the eigenvalue distribution of a quantum state, the stronger the majorization. It can be verified that the majorization for different eigenvalue distributions satisfies the following relation:
		 \begin{equation*}
		 (1,0,\ldots,0)\succ(\lambda_d,\lambda_{d-1},\ldots,\lambda_1)\succ(\frac{1}{d},\frac{1}{d},\ldots,\frac{1}{d}),
		 \end{equation*}
		 where $\{\lambda_i\}$ represents any eigenvalue distribution distinct from the two distributions on the left and right. In the dynamical evolution of the battery state, a high degree of QST implies an uneven eigenvalue distribution, which in turn leads to enhanced majorization of the battery state $\rho_b(t)$. Since the battery capacity $\mathcal{C}$ is a Schur-convex functional, greater majorization results in a higher battery capacity. The relationship between the QST of the battery state and its capacity during the dynamical evolution, is shown in Figure 4.
		 \begin{figure}[htbp]
		 	\centering
		 	\includegraphics[width=0.5\textwidth]{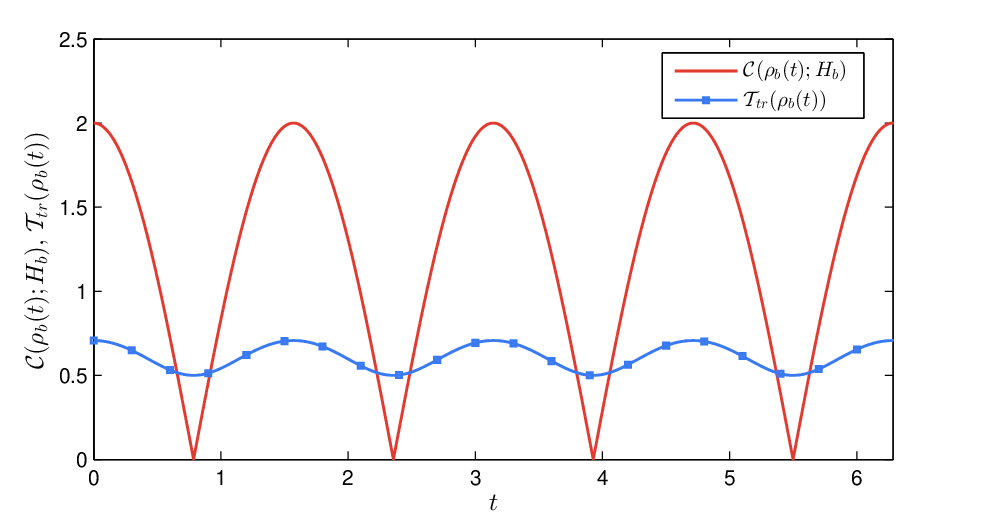}
		 	\vspace{-1em} \caption{During the time evolution of the battery state, the battery capacity $\mathcal{C}(\rho_b(t);H_b)$ increases with the degree of QST.} \label{Fig.4}
		 \end{figure}
		 
		 Additionally, we can investigate the relationship between the QST degree of the battery state and its residual battery capacity.
		 \begin{theorem}
		 During the time evolution of the quantum battery system governed by the Hamiltonian in Eqs. (\ref{e7}-\ref{e10}), the QST measure $\mathcal{T}_{tr}(\rho_b(t))$ and the residual battery capacity $R(\rho(t),H)$ at time $t$ satisfy:
		 \begin{equation}\label{e27}
		 4\mathcal{T}_{tr}(\rho_b(t))^2=1+\frac{(\epsilon_4-\epsilon_1-R(\rho(t),H))^2}{4(\omega_b+\omega_c)^2},
		 \end{equation}
		 where $\epsilon_4$ and $\epsilon_1$ represent the maximum and minimum eigenvalues of the total system Hamiltonian, respectively.
		 \end{theorem}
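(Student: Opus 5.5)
The plan is to eliminate the battery capacity between two identities already proved: the QST--capacity relation of Theorem 9, Eq. (\ref{e26}), and the capacity--residual relation (\ref{e17}) obtained in the proof of Theorem 3. No new spectral computation is needed. Everything is expressed through the single parameter $p(t)$ used in the proof of Theorem 1, so the argument is purely algebraic.

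First, squaring Eq. (\ref{e26}) gives
\begin{equation*}
4\mathcal{T}_{tr}(\rho_b(t))^2=1+\frac{\mathcal{C}(\rho_b(t);H_b)^2}{4\omega_b^2}.
\end{equation*}
Equivalently, I would start directly from $\mathcal{T}_{tr}(\rho_b(t))=\sqrt{(1-2p(t))^2+1}/2$ as computed in the proof of Theorem 9.

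Second, I would recall from the proof of Theorem 3 that $\mathcal{C}(\rho_b(t);H_b)=2\omega_b|1-2p(t)|$ and $\mathcal{C}(\rho_c(t);H_c)=2\omega_c|1-2p(t)|$. The total capacity is conserved under the unitary evolution, $\mathcal{C}(\rho(t);H)=\epsilon_4-\epsilon_1$, since the initial state $|01\rangle$ is pure. Hence
\begin{equation*}
\epsilon_4-\epsilon_1-R(\rho(t),H)=2(\omega_b+\omega_c)|1-2p(t)|,
\end{equation*}
which is just Eq. (\ref{e17}) rewritten. Solving for $|1-2p(t)|$ and substituting $(1-2p(t))^2=|1-2p(t)|^2$ into the expression for $4\mathcal{T}_{tr}^2$ yields Eq. (\ref{e27}) directly. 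The $\omega_b$ factors cancel, which is why only $\omega_b+\omega_c$ survives in the denominator.

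The only step requiring care is the conservation claim $\mathcal{C}(\rho(t);H)=\epsilon_4-\epsilon_1$ and the charger capacity formula. For the latter, I would check that $\rho_c(t)=\mathrm{diag}(1-p(t),p(t))$ in the computational basis, so its spectrum coincides with that of $\rho_b(t)$. The former follows because the capacity depends only on the spectrum of $\rho$ and on $H$, and both are invariant under $U(t)$. One should also note that the spin Hamiltonians $H_b,H_c$ have level splittings $2\omega_b,2\omega_c$, which fixes the prefactors. With these in hand, the theorem is a two-line substitution, and the positive correlation between $\mathcal{T}_{tr}$ and $\epsilon_4-\epsilon_1-R$ is immediate.
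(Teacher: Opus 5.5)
Your proposal is correct and matches the paper's proof: the paper also combines $R(\rho(t),H)=\epsilon_4-\epsilon_1-2(\omega_b+\omega_c)|1-2p(t)|$ from the proof of Theorem 3 with $\mathcal{T}_{tr}(\rho_b(t))=\frac{1}{2}\sqrt{1+(1-2p(t))^2}$ and eliminates $|1-2p(t)|$. Your added checks, that $\rho_c(t)=\mathrm{diag}(1-p(t),p(t))$ and that purity gives $\mathcal{C}(\rho(t);H)=\epsilon_4-\epsilon_1$, only make explicit what the paper uses implicitly.
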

		 \begin{proof}
		 According to the proof of Theorem 3, the residual battery capacity at time $t$ is given by
		 \begin{equation}\label{e28}
		 	R(\rho(t),H)=\epsilon_4-\epsilon_1-2(\omega_b+\omega_c)|1-2p(t)|.
		 \end{equation}
		 The expression for the QST measure $\mathcal{T}_{tr}$ of the battery state can be written as
		 \begin{equation}\label{e29}
		 	\mathcal{T}_{tr}(\rho_b(t))=\frac{1}{2}\sqrt{1+(1-2p(t))^2}.
		 \end{equation}
		 Combining Eqs. (\ref{e28}) and (\ref{e29}), we can derive the relation between the residual battery capacity and the QST of the battery state during the evolution.
		 \end{proof}
		 Eq. (\ref{e27}) clearly shows that $R(\rho(t),H)$ is negatively correlated with $\mathcal{T}_{tr}(\rho_b(t))$. Therefore, the increase in battery state capacity with the degree of QST can also be understood from the perspective of the trade-off relationship between subsystem battery capacity and residual battery capacity. Furthermore, based on the above results and analysis, we can also establish the following relationship between QST and entanglement, steering, Bell nonlocality, coherence, imaginarity.
		 \begin{equation*}
		 \begin{split}
		 &\mathcal{E}(\rho(t))=2\sqrt{\frac{1}{2}-\mathcal{T}_{tr}(\rho_b(t))^2},\\
		 &\mathcal{S}(\rho(t))=4-8\mathcal{T}_{tr}(\rho_b(t))^2,\\
		 &\mathcal{B}(\rho(t))=2\sqrt{3-4\mathcal{T}_{tr}(\rho_b(t))^2}-2,\\
		 &C_1(\rho(t))=2\sqrt{\frac{1}{2}-\mathcal{T}_{tr}(\rho_b(t))^2},\\
		 &\mathcal{I}(\rho(t))=\sqrt{2-4\text{Re}(\alpha\beta^*)^2-4\mathcal{T}_{tr}(\rho_b(t))},
		 \end{split}
		 \end{equation*}
		  where
		  $$\text{Re}(\alpha\beta^*)=\frac{\frac{(\omega_b-\omega_c)}{J_1}(1-\cos(e_1-e_2)t)}{\frac{2[J_1^2+(\omega_b-\omega_c)^2]}{J_1^2}}.$$ Since we introduce no additional assumptions regarding the system parameters or evolution time, the relationships we obtain, including those between battery capacity and quantum resources and those among the quantum resources themselves, are independent of the choice of system parameters and evolution time. Additionally, we discuss the relationship between quantum resources and battery capacity under dephasing noise channel in Appendix B. Although this relationship is also affected by the noise strength, it remains monotonic for a fixed noise strength. Since dephasing noise does not alter the populations of the battery states, the connection between quantum resource and battery capacity remains unaffected.

	\section{IV. Conclusions and discussions}
    In this work, we have investigated the correlations between quantum battery capacity and quantum resources. It have been observed that the battery capacity decreased monotonically as quantum entanglement, steering, Bell nonlocality, coherence increased, with the capacity reaching its maximum only in the absence of these four quantum resources. To further elucidate this behavior, we have introduced the residual battery capacity, defined as the gap between the total system capacity and the sum of the individual spin capacities of the battery and charger. This residual capacity have been found to exhibit a positive correlation with the degree of entanglement. Moreover, unlike first four quantum resources, although quantum imaginarity has also led to a monotonic reduction in battery capacity, its vanish under system detuning did not necessarily coincide with peak capacity. This effect had become increasingly pronounced with larger detuning values. In contrast to entanglement, steering, Bell nonlocality, coherence, and imaginarity, quantum state texture has demonstrated a positive correlation with battery capacity, while showing a negative relationship with the aforementioned quantum resources and the residual capacity. 
    
Conventional wisdom holds that quantum resources are fundamental to the superiority of quantum batteries. However, our work has revealed the dual role of quantum entanglement: while it is essential for energy transfer between the charger and battery, it simultaneously limits the battery state capacity. Notably, we have identified quantum state texture as a distinct resource which, unlike other resources, exerts a positive influence on capacity. Further investigation of quantum state texture has may emerged as a key direction for enhancing future quantum energy storage systems.
    
    \bigskip
{\bf Conflict of Interest}

The authors declare no conflict of interest.

{\bf Data Availability Statement}

Data sharing is not applicable to this article as no new data were created or analyzed in this study.

{\bf Keywords} 
  quantum battery capacity; quantum entanglement; Bell nonlocality; quantum coherence; quantum state texture

	\bigskip
	{\bf Acknowledgments:} ~This work is supported by the National Natural Science Foundation of China (NSFC) under Grant Nos. 12204137 and  12564048; the Natural Science Foundation of Hainan Province under Grant No. 125RC744 and the specific research fund of the Innovation Platform for Academicians of Hainan Province.

	\appendix
	\section{Appendix A: Proof of Eq. (\ref{e14})}
	\setcounter{equation}{0}
	\renewcommand{\theequation}{A\arabic{equation}}
    Since the density matrix of the entire battery system remains of the X-form throughout the dynamical evolution, it is sufficient to restrict $\rho$ to the X-states to complete the proof. 
    
    Given the two-qubit X state $\rho$ in the computational basis,
    $$
    \rho=\left(\begin{array}{cccc}
    	\rho_{11} & 0 & 0 & \rho_{14} \\
    	0 & \rho_{22} & \rho_{23} & 0 \\
    	0 & \rho_{32} & \rho_{33} & 0 \\
    	\rho_{41} & 0 & 0 & \rho_{44}
    \end{array}\right),
    $$
    where $\sum_{i=1}^{4}\rho_{ii}=1$ and $\rho_{ii}\geq0$. We denote the dephased state of $\rho$ by $\Delta(\rho)$. Note that $\Delta(\rho)$ is a diagonal state, so its eigenvalues are its populations. We arrange these eigenvalues in ascending order and denote them as $\{\alpha_i\}_{i=1}^4$. For the whole system Hamiltonian $H=H_b+H_c+H_I$ with eigenvalues $$\{-J_2\pm\sqrt{J_1^2+(\omega_b-\omega_c)^2}, J_2\pm(\omega_b+\omega_c)\}\equiv\{\epsilon_i\}_{i=1}^4$$ in ascending order, one have (without loss of generality, assume $\omega_b\geq\omega_c$)
    \begin{widetext}
    \begin{equation*}
    \begin{split}
    \mathcal{C}(\Delta(\rho);H)&=(\alpha_4-\alpha_1)(\epsilon_4-\epsilon_1)+(\alpha_3-\alpha_2)(\epsilon_3-\epsilon_2)\geq(\alpha_4-\alpha_1)[J_2+(\omega_b+\omega_c)-(J_2-(\omega_b+\omega_c))]\\
    +&(\alpha_3-\alpha_2)[-J_2+\sqrt{J_1^2+(\omega_b-\omega_c)^2}-(-J_2-\sqrt{J_1^2+(\omega_b-\omega_c)^2})]\geq2(\alpha_4-\alpha_1)(\omega_b+\omega_c)+2(\alpha_3-\alpha_2)(\omega_b-\omega_c)\\
    =&2\omega_b(\alpha_4+\alpha_3-\alpha_2-\alpha_1)+2\omega_c(\alpha_4+\alpha_2-\alpha_3-\alpha_1)\geq2\omega_b|\rho_{11}+\rho_{22}-\rho_{33}-\rho_{44}|+2\omega_c|\rho_{11}+\rho_{33}-\rho_{22}-\rho_{44}|\\
    =&\mathcal{C}(\rho_b;H_b)+\mathcal{C}(\rho_c;H_c).
    \end{split}
    \end{equation*}
    \end{widetext}
    The first inequality holds because the $\{\epsilon_i\}_{i=1}^4$ are arranged in ascending order of the Hamiltonian eigenvalues. The second inequality stems from a straightforward bounding relation: $\sqrt{J_1^2+(\omega_b-\omega_c)^2}\geq(\omega_b-\omega_c)$. The third inequality is due to the fact that the $\{\alpha_i\}_{i=1}^4$ are an ascending rearrangement of the populations $\{\rho_{ii}\}_{i=1}^4$. Observing the majorization relation $\Delta(\rho)\prec\rho$, together with the Schur-convexity of the battery capacity \cite{yyas}, we obtain that:
    \begin{equation*}
    \mathcal{C}(\rho;H)\geq\mathcal{C}(\Delta(\rho);H)\geq\mathcal{C}(\rho_b;H_b)+\mathcal{C}(\rho_c;H_c).
    \end{equation*}
    
   \section{Appendix B: Under the dephasing noise channel}
   \setcounter{equation}{0}
   \renewcommand{\theequation}{B\arabic{equation}}
   The dephasing noise channel can be modeled by the following set of Kraus operators:
   \begin{equation}
   	\begin{split}
   	&K_1=(1-\gamma)I\otimes I,\,K_2=\sqrt{\gamma(1-\gamma)}I\otimes\sigma_z,\\
   	&K_3=\sqrt{\gamma(1-\gamma)}\sigma_z\otimes I,\,K_4=\gamma\sigma_z\otimes\sigma_z,
   	\end{split}
   \end{equation}
   where $\gamma\in[0,1]$ is the phase flip probability. Denoting the time-evolved state of the bipartite system as $|\psi(t)\rangle=\alpha|01\rangle+\beta|10\rangle$, one has
   \begin{equation*}
   \begin{split}
   \tilde{\rho}(t)&=\Upsilon(|\psi(t)\rangle\langle\psi(t)|)=\sum_{i=1}^{4}K_i|\psi(t)\rangle\langle\psi(t)|K_i^\dagger.
   \end{split}
   \end{equation*}
   Then the density matrix can be written as 
   $$
   \tilde{\rho}(t)=\left(\begin{array}{cccc}
   	0 & 0 & 0 & 0 \\
   	0 & |\alpha|^2 & (1-2\gamma)^2\alpha\beta^* & 0 \\
   	0 & (1-2\gamma)^2\alpha^*\beta & |\beta|^2 & 0 \\
   	0 & 0 & 0 & 0
   \end{array}\right).
   $$
   Dephasing noise only affects the coherence terms while preserving the population of the quantum states. Therefore, the capacity of the battery state remains unchanged. The battery–charger entanglement can, at this point, be written as \cite{smhr}:
   \begin{equation}
   \mathcal{E}(\tilde{\rho}(t))=2(1-2\gamma)^2|\alpha||\beta|.
   \end{equation}
   For $\gamma=\frac{1}{2}$, there is no entanglement between the battery and charger. When $\gamma\neq\frac{1}{2}$, the capacity of the battery state is related to the battery-charger entanglement by the expression:
   \begin{equation}
   	\mathcal{C}(\tilde{\rho}_b(t);H_b)=2\omega_b\sqrt{1-\frac{\mathcal{E}(\tilde{\rho}(t))^2}{(1-2\gamma)^4}}.
   \end{equation}
   Despite the inclusion of the additional noise parameter $\gamma$ in the relation, the battery capacity and entanglement still exhibit a monotonically decreasing relationship for a fixed $\gamma$. Additionally, in the absence of noise, the expression reduces to Eq. (\ref{e13}).
   
   Now, We turn to quantum steering and Bell nonlocality. To this end, it is necessary to compute the product of the correlation matrix with its transpose under a dephasing noise channel, which can be written as
   \begin{small}
   	\begin{equation*}
   		T(\tilde{\rho}(t))T(\tilde{\rho}(t))^\tau=\left(
   		\begin{array}{ccc}
   			4(1-2\gamma)^4|\alpha|^2|\beta|^2 & 0 & 0\\
   			0 & 4(1-2\gamma)^4|\alpha|^2|\beta|^2 & 0\\
   			0 & 0 & 1\\
   		\end{array}
   		\right ),
   	\end{equation*} 
   \end{small}
   where
   \begin{equation*}
   	\begin{split}
   		\alpha&=\frac{e^{-ie_1t\xi_1-e^{-ie_2t}\xi_2}}{\xi_1-\xi_2},\,\,
   		\beta=\frac{e^{-ie_1t-e^{-ie_2t}}}{\xi_1-\xi_2}.
   	\end{split}
   \end{equation*}
   Based on this, we have
   \begin{equation}
   \begin{split}
   &\mathcal{S}(\tilde{\rho}(t))=8(1-2\gamma)^4|\alpha|^2|\beta|^2,\\
   &\mathcal{B}(\tilde{\rho}(t))=2\sqrt{1+4(1-2\gamma)^4|\alpha|^2|\beta|^2}-2.
   \end{split}
   \end{equation}
   When $\gamma\neq\frac{1}{2}$, the relations between the capacity of the battery state and quantum steering, as well as Bell nonlocality, are
   \begin{equation*}
   \begin{split}
   \mathcal{C}(\tilde{\rho}_b(t);H_b)&=2\omega_b\sqrt{1-\frac{\mathcal{S}(\tilde{\rho}(t))}{2(1-2\gamma)^4}},\\
   \mathcal{C}(\tilde{\rho}_b(t);H_b)&=2\omega_b\sqrt{1-\frac{1}{(1-2\gamma)^4}(\mathcal{B}(\tilde{\rho}(t))+\frac{\mathcal{B}(\tilde{\rho}(t))^2}{4})}.
   \end{split}
   \end{equation*}
   
   Next, we discuss quantum coherence and imaginarity. Note that for the state $\tilde{\rho}(t)$, the $l_1$-norm of coherence is equal to the entanglement concurrence, i.e.,
   \begin{equation*}
   	C_1(\tilde{\rho}(t))=\sum_{i\neq j}|\tilde{\rho}(t)_{ij}|=2(1-2\gamma)^2|\alpha|.|\beta|=\mathcal{E}(\tilde{\rho}(t)).
   \end{equation*}
   Using the relation between coherence and imaginarity,
   \begin{equation}\label{e31}
   	C_1(\tilde{\rho}(t))^2=4(1-2\gamma)^4\text{Re}(\alpha\beta^*)^2+\mathcal{I}(\tilde{\rho}(t))^2.
   \end{equation}
   one can obtain the following results:
   \begin{equation*}
   	\begin{split}
   	\mathcal{C}(\tilde{\rho}_b(t);H_b)&=2\omega_b\sqrt{1-C_1(\tilde{\rho}(t))^2},\\
   	\mathcal{C}(\tilde{\rho}_b(t);H_b)&=2\omega_b\sqrt{1-4(1-2\gamma)^4\text{Re}(\alpha\beta^*)^2-\mathcal{I}(\tilde{\rho}(t))^2},
   	\end{split}
   \end{equation*}
   where
   \begin{equation*}
   \text{Re}(\alpha\beta^*)=\frac{\frac{(\omega_b-\omega_c)}{J_1}(1-\cos(e_1-e_2)t)}{\frac{2[J_1^2+(\omega_b-\omega_c)^2]}{J_1^2}}.
   \end{equation*}
   
   Under the influence of a dephasing noise channel, although the quantum state texture of the composite system changes, the quantum state texture of the battery subsystem remains unchanged. Consequently, the relationship between the capacity of the battery state and its quantum state texture is unaffected by the noise parameter $\gamma$, exhibiting complete robustness in this scenario.
\end{document}